\colorlet{Changes@Color}{red}
\newtheorem{thm}{Theorem}[section]
\newtheorem{prop}[thm]{Proposition}
\newtheorem{lem}[thm]{Lemma}
\newtheorem{cor}[thm]{Corollary}
\theoremstyle{remark}
\newtheorem{rem}[thm]{Remark}
\newtheorem{ex}[thm]{Example}
\theoremstyle{definition}
\newtheorem{defn}[thm]{Definition}
\renewcommand{\phi}{\varphi} 
\newcommand{\E}{\mathbb{E}} 
\DeclareRobustCommand\bfseries{%
	\not@math@alphabet\bfseries\mathbf
	\fontseries\bfdefault\selectfont
	\boldmath 
}
\renewcommand{\email}[2][]{%
	\ifx\emails\@empty\relax\else{\g@addto@macro\emails{,\space}}\fi%
	\@ifnotempty{#1}{\g@addto@macro\emails{\textrm{(#1)}\space}}%
	\g@addto@macro\emails{#2}%
}
\begin{document}
 
\title{Statistically consistent term structures have affine geometry}

\begin{abstract}
	
This paper is concerned with finite dimensional models for the entire term structure for energy futures. As soon as a finite dimensional set of possible yield curves is chosen, one likes to estimate the dynamic behaviour of the yield curve evolution from data. The estimated model should be free of arbitrage which is known to result in some drift condition. If the yield curve evolution is modelled by a diffusion, then this leaves the diffusion coefficient open for estimation. From a practical perspective, this requires that the chosen set of possible yield curves is compatible with any obtained diffusion coefficient. In this paper, we show that this compatibility enforces an affine geometry of the set of possible yield curves.

\smallskip
\noindent \textbf{Keywords:} Electricity futures models, Finite dimensional realisation, Affine geometry, Statistically consistency.

\smallskip
\noindent \textbf{MSC(2010):} 91B24, 91G20.
\end{abstract}

%
\author[Kr\"uhner]{Paul Kr\"uhner$^\ast$}
\address[$\ast$]{Institute for Statistics and Mathematics, WU Wien, Vienna, Austria. Email: paul.eisenberg@wu.ac.at}

\author[Xu]{Shijie Xu$^\dagger$}
\address[$\dagger$]{\textit{(Corresponding author)} Institute for Financial and Actuarial Mathematics, University of Liverpool, Liverpool, Uk. Email: ShijieXu@liverpool.ac.uk}



\maketitle

\section{Introduction}

Since energy markets in some countries around the world are liberated, trading in energy has gained importance for non-energy traders. Energy exchanges, e.g.\ European Energy Exchange (EEX), Scandinavian Nordic Power Exchange (Nord Pool) and Chicago Mercantile Exchange (CME), are not only interesting to traditional participants like electricity producers, retailers and consumers but also to institutional investors like pension funds and hedge funds, cf.\ Benth, Benth and Koekebakker \cite{benth2008stochastic}. 




The European Energy Exchange (EEX) reports electricity prices for spot markets, intraday markets and futures markets. Our paper focuses on the modelling of energy futures markets. The three main approaches to model energy futures prices in the literature are: to derive their prices from supply-demand equilibrium modelling or spot modelling or to model them directly via the Heath-Jarrow-Morton approach.

Among these, the most common approach is supply-demand equilibrium modelling which is followed by the balancing restriction. It aims at modelling existing (sometimes future) production capacity and the amount of consumption required at any time. Typically, this is used to explain hourly electricity spot prices via supply and demand matching. B\"uhler and M\"uller-Merbach  \cite{buhler2007dynamic}  propose a dynamic competitive equilibrium model for pricing electricity futures.  

The second most common approach is the so-called spot modelling. It is ignorant to supply and demand but models the spot price directly as a stochastic process. Lucia and Schwartz \cite{Lucia} have studied one and two-factor models for the valuation of power derivatives. It can be used for spot prices forecasts based on past observations or to explain derivatives written on the energy market, e.g.\ futures. However, there is no simple link between futures prices and spot prices.  Additional to spot prices forecasts, one requires to introduce a risk premium $R$, i.e.\
$$F_t(T_1,T_2) = \frac{1}{T_2-T_1} \int_{T_1}^{T_2} \E[ r_u | \mathcal F_t] du + R(t,T_1,T_2), \quad 0\leq t\leq T_1 \leq T_2,$$
(or, alternatively, the risk premium $R$ equals $0$ under the pricing measure $\mathbb Q$, see Benth, Kallsen and Meyer-Brandis  \cite{benth2007non} for a more systematic approach and overview to spot modelling.)

Another approach is the so-called Heath-Jarrow-Morton (HJM) approach which we adopt in this article. This approach has the advantage of incorporating futures price data more directly into the model but does not explain futures prices in a deeper way. One models an artificial instantaneous quantity, the risk-neutral price forecast of one unit of energy delivered at a fixed future point of time. Roughly, the left-hand side of
$$f_t(x)  = \E_{\mathbb Q}[ r_{x+t} | \mathcal F_t], $$
is the directly modelled price where $r$ is the price process of a hypothetically existing underlying spot model and $\mathbb Q$ is an unknown pricing measure. Actual futures prices are recovered via
$$F_t(T_1,T_2) = \frac{1}{T_2-T_1} \int_{T_1}^{T_2} f_t(u-t)  du, \quad 0\leq t\leq T_1 \leq T_2.$$

This approach is inspired by the analogous approach for money markets, cf.\ Heath, Jarrow and Morton \cite{HJM} and is easily adopted for energy markets. It has been followed by a number of authors. Koekebakker and Ollmar \cite{koekebakker2005forward} specify a multi-factor term structure model in an {HJM}-framework to examine the forward curve dynamics in the Nordic electricity market. Benth and Kr\"uhner  \cite{benth2018approximation} find that basically any {HJM}-type model can be approximated by a finite dimensional model with no-arbitrage. Further, they show that the forwards can be represented as the sum of the spot price and the transformation of several generalised Ornstein-Uhlenbeck processes, which gives a new perspective on the findings of Koekebakker and Ollmar \cite{koekebakker2005forward}.

In this paper, we adopt the HJM approach for the instantaneous quantity $f_t$ and additionally ask for several desirable properties of the model. 
We will make these properties precise in Section \ref{s:MainResults}.
\begin{enumerate}
  \item We ask the model to be driven by a finite dimensional process via
      $$ f_t(x) = g(x,Y_t)$$ 
 where $Y$ is some finite dimensional process and $g:\mathbb R_+\times \mathbb R^d\rightarrow \mathbb R$ some deterministic given function. We call such a representation for $f$ finite dimensional realisation (FDR). This approach has been analysed in B\"uhler \cite{buehler2006consistent} for the modelling of variance swap prices.
 \item Absence of arbitrage (NA).
 \item Additionally, to (1), (2) the model should be compatible with any possible diffusion coefficient for $Y$.
\end{enumerate}

The first property is useful for computer implementation and has essentially be analysed in B\"uhler \cite{buehler2006consistent}. Absence of arbitrage is a theoretically desirable property which usually leads to some kind of drift condition, see \cite[Theorem 2.2]{buehler2006consistent}. If one has chosen a function $g$, then one would estimate the diffusion coefficient (for instance as a function of the state of $Y$) from data. Then one would use the drift from the drift condition (implied by (2)) and the process $Y$ is, hopefully, completely determined. As it turns out, whether this is possible for any resulting diffusion coefficient or not depends on the structure of the function $g$. We will show that this consistency condition implies that $g$ has the following affine structure:
\begin{align}\label{affine-choice}
g(x,y) = c(x) + u(x)A(y), \quad x\geq 0,\ y\in \mathbb R^d,
\end{align}
for some functions $c$, $u$ and $A$. This in turn means that, if one wants to choose possible forward curves $g(\cdot,y), y\in \mathbb R^d$ and be free to estimate the diffusion coefficient of $Y$ from data while keeping absence of arbitrage, then $g$ has to be affine as above. 


We will refer to (3) as statistically consistency condition ({SCC}) and this is made precise in Definition \ref{SCC} below.

In Example \ref{example2} below we show that term structure models for energy futures do not need to have affine geometry. That is, FDR models do not need to be affine. In a related work, Tappe \cite{Tappe} shows that {FDR} models of L\'evy-driven stochastic partial differential equations ({SPDE}) with arbitrarily small jumps have affine geometry.  Bj\"ork \cite{bjork} has shown that {FDR} models for interest rate markets which have affine geometry are of quasi-exponential form. This result readily carries over to futures modelling for energy markets.

This paper is organised as follows. In Section \ref{s:MainResults} we introduce our setting, recall some essentially known results and formulate our main result Theorem \ref{Main}. In Section \ref{s:example} we provide an example for the necessity of SCC. In the appendix, we give some technical proofs. 

\subsection{Notations} 

		Let $U\subseteq \mathbb R^d$ open and $g:U\rightarrow \mathbb R$ be differentiable, we write	
		$$ \nabla_yg(x,y):= \left(\partial_{y_1}g(x,y),\dots, \partial_{y_d}g(x,y)\right), $$
		for the \emph{gradient} of $g(x,\cdot)$ at position $y\in U$.
		
		 For $x,y\in\mathbb R^d$ we define the standard \emph{scalar product} $xy:=\sum_{j=1}^d x_j y_j$.
		 
	   	 We denote the \emph{Euclidean norm} for $x\in \mathbb R^d$ by $|x| := \sqrt{\sum_{j=1}^d x_j^2}$. 
	   	 
		 Throughout the entire paper $\left(\Omega, (\mathcal F_t)_{t\geq 0}, \mathbb P\right)$ denotes a filtered \emph{probability space} where $\mathcal F_0$ contains all $\mathbb P$-null sets and $\mathcal F$ is right-continuous. Also, we denote the \emph{Lebesgue measure} on $\mathbb{R}^d$ by $\lambda_d$. 
		 
		For an $\mathbb R^d$-valued stochastic process $X$ we say that $X$ has a \emph{drift coefficient} $b$ where $b$ is progressively measurable and pathwise integrable if
		$$ X_t - \int_0^tb_s ds,\quad t\geq 0, $$
		defines a local martingale and in this case we write $\mathrm{drift}(X) = b_t dt $.

\section{Main results}\label{s:MainResults}

Electricity forward traded in the real world are contracts prescribe power and delivery time interval in the future. We denote by $F_t(T_1,T_2)$, the discounted price at time $t \geq 0$ of a forward contract delivering $1$ unit of energy with equal delivery rate over the time interval $[T_1,T_2]$. 

Based on the real contract traded in the market, we introduce the instantaneous forward price. The instantaneous forward price $F_t(T)$ is the price at time $t$ of a forward contract delivering $1$ unit of energy instantly at time $T$. This artificial price typically  only exists in the model. 

The price at time $t$ of a forward contract delivering $1$ unit of energy from $T_1$ to $T_2$ is the average forward during the delivery period,
	$$F_t(T_1,T_2)=\frac{1}{T_2-T_1}\int_{T_1}^{T_2}F_t(\mu)d\mu,\quad  0\leq t\leq T_1\leq T_2.$$
	
We look at futures prices at a fixed point of time $t$, e.g.\ now. 
Artificially, we pretend that we have the price for all futures with all possible delivery periods in the market (which we call full term structure). Then we can determine the price for one unit of energy delivered instantaneously via
$$  F_t(T)  = \lim_{h\searrow0 } F_t(T,T+h). $$ 

Throughout this paper, we will transfer {HJM}-equations into the Musiela parametrisation (or we say Heath-Jarrow-Morton-Musiela ({HJMM}) equations, see Brace and Musiela \cite{brace1994multifactor}), i.e.\ we observe the price for instantaneous delivery futures in time to maturity rather than time of maturity, and we define
$$f_t(x):=F_t(t+x), \quad t\geq 0, $$
where $x\geq 0$ is the time to delivery. By a model for the futures price we mean a function
$ f:\Omega\times \mathbb R_+\times \mathbb R_+ \rightarrow \mathbb R $
where
$ (f_{\omega,t}(x))_{\omega\in\Omega, t\geq 0} $
is a progressively measurable stochastic process for any $x\geq 0$ and $x\mapsto f_t(x)$ is locally integrable for any fixed $t\geq 0$.

We like to prescribe the possible curves $x\mapsto f_t(x)$, which we can see in the market in an as simple as possible mathematical way. This is important for various reasons:
\begin{enumerate}
	\item Simple descriptions are easier to communicate;
	\item Calibration is easier if the mathematical structure is as simple as possible;
	\item Simulations are less involved.
\end{enumerate}

The simplest but somewhat generic method of prescribing possible curves is to say that we have a family of possible curves indexed by some element $y$, i.e.\ 
$$  f_t(x) = g(x,y), \quad x\geq0 ,$$
for some $y$ depending on $t\geq 0$, $\omega\in\Omega$. We will further assume that $y$ lies in $\mathbb R^d$, i.e.\ we have a family of possible market curves indexed by the $d$-dimensional vector space.

We will assume that the dependence on the parameter $y$ is “smooth” and the dependence on the parameter $x$ is “smooth” as well. In fact we assume that $g$ is at least of class $C^{1,2}$. (e.g.\ $g(x,y) = y e^{-x}$ for $y\in\mathbb R$, or $g(x,y) = y_1 e^{-xy_2}$ for $y\in\mathbb R^2$.)

For each point of time $t$ and each scenario $\omega\in \Omega$ we see a curve  $x\mapsto f_t(x)$ and find the corresponding (random) parameter $Y_t$ and obtain
$$ f_t(x) = g(x,Y_t), \quad t\geq 0, \ x\geq 0. $$

Finally, we will assume that $Y$ is an It\^o process as well. 

The notion of FDR is originating from the following problem. Given a process $f$ with value in a Hilbert space, when does it stay on some finite dimensional sub-manifold $M$ with $f_t=g(Y_t)$ locally for some finite dimensional process $Y$? This question has been analysed in \cite[Theorem 13]{teichmann2005stochastic}. We start with such a representation $f_t=g(Y_t)$. This is the point of view taken in \cite[Definition 2.4]{buehler2006consistent} in the case of variance curves. 

This leads to the following definition:

\begin{defn}[Finite dimensional realisation]\label{FDR}
	We say that a model $f$ has \textit{finite dimensional realisation} ({FDR}) if
	$$f_t(x)=g(x,Y_t),$$
	where $g : \mathbb R_+\times \mathbb R^d \rightarrow \mathbb R$ is a deterministic $C^{1,2}$-function and $Y$ is a finite dimensional stochastic process such that
	$$ dY_t = \beta_t dt + \sigma_t dW_t, \quad t\geq 0, \sigma>0,$$
	for some Brownian motion $W$ where $\beta$ (resp.\ $\sigma $) are $d$-dimensional (resp.\ $d\times d$-dimensional) progressively measurable process such that  
	$$ \int_0^t |\beta_s| ds < \infty ,\quad \int_0^t |\sigma_s|^2 ds <\infty,$$
	and such that for any bounded open set $U\subseteq\mathbb R$ we have $\beta_{\cdot} 1_{\{|Y_{\cdot}|\in U \}}$ and $\sigma_{\cdot} 1_{\{|Y_{\cdot}|\in U \}}$ are bounded in $t$ and $\omega$.
\end{defn}

Motivated by Teichmann, Klein and Cuchiero \cite{Cuchiero}, we use the following notion of risk neutrality:
\begin{defn}[Risk-neutral]\label{t:NA}
	We say that a model $f$ is \textit{risk-neutral} ({RN}) if
	\begin{enumerate}	
		\item \label{NA:def:1} $x \mapsto f_t(x)$ is a $C^1$-function for any $t\geq 0$;
		\item \label{NA:def:2} $\mathrm{drift}(f(x)) = \partial_x f_t(x) dt$ for any $x\geq 0$.
	\end{enumerate}

	We say that a model $f$ does not allow for arbitrage ({NA}) if there is an equivalent measure $\mathbb Q$ such that it is risk neutral under $\mathbb Q$.
\end{defn}

\begin{defn}[Affine geometry]\label{d:flat}
  We say that an FDR model $f_t(x) = g(x,Y_t)$ is of \textit{affine geometry} (or \textit{affine-like}) if there are functions $c:\mathbb R_+\rightarrow\mathbb R$, $u:\mathbb R_+\rightarrow \mathbb R^d$ and $A:\mathbb R^d\rightarrow \mathbb R^d$ such that
   \begin{align}\label{eq:affine geometry}
   	g(x,y) = c(x) + u(x)A(y),\quad x\geq 0,y\in\mathbb R^d. 
   \end{align} 
\end{defn}

\begin{rem}
In the above definition the name "affine" refers to the geometric shape of the range of $G:\mathbb R^d\rightarrow \mathbb R^{\mathbb R_+}, y \mapsto g(\cdot,y)$. If an FDR model is affine-like, then the possible curves are all in the affine space $c + \mathrm{Span}(u_1,\dots,u_d)$.

\end{rem}

\begin{rem}
From an economical perspective, $\left( F_t(T_1,T_2) \right) _{0\leq t\leq T_1 \leq T_2}$ are prices for the contract which are actually on the market. The fundamental theorem of asset pricing, which is stated by Delbaen and Schachermayer \cite{Delbaen}, showed that these are local martingales under some equivalent measures if and only if there is no-arbitrage in the model. Teichmann, Klein and Cuchiero \cite{Cuchiero} extended the criterion to large markets, i.e.\ markets with possibly infinitely many securities.

If $f$ does not allow for arbitrage in our sense, then $(F_{\cdot}(T_1,T_2))_{0< T_1\leq T_2}$  are local martingales on $[0, T_1]$. (See Proposition \ref{NA} below.)
Conversely, if $(F_{\cdot}(T_1,T_2))_{0< T_1\leq T_2}$ are local martingales, one would expect that under reasonable conditions, for any fixed $T\geq 0$,  $F_{\cdot}(T)$ is a local martingale under some equivalent measures which yields that \eqref{NA:def:2} in the above definition must be met under some equivalent measure. 

We proceed with the risk-neutral view, i.e.\ we pretend that $\mathbb P$ is already the local martingale measure.
\end{rem}


			
If we assume a finite dimensional realisation, then we can characterise risk neutrality in various ways.  The following Proposition is in the spirit of Buehler's result \cite[Theorem 2.2]{buehler2006consistent} for variance curve models with slightly different assumptions. For the sake of completeness we give a proof.

\begin{prop}[Risk neutrality condition]\label{NA}
Assume that FDR holds. Then the following statements are equivalent:
\begin{enumerate}
	\item \label{NA:prop:1} RN holds;
	\item \label{NA:prop:2} There is a $\lambda_1\otimes \mathbb P$-null set $N\subseteq \mathbb R_+\times\Omega$ such that
	\begin{align}\label{eq:general}
	\partial_xg(x,Y_t) = \nabla_y g(x,Y_t)\beta_t + \frac12 \sum_{i,j=1}^d \sigma_{t,i,j}\sigma_{t,j,i} \partial_{y_i}\partial_{y_j}g(x,Y_t),
	\end{align}
	for any $x\geq 0$, $(t,\omega)\in (\mathbb R_+\times\Omega)\setminus N$;
	\item $(F_t(T_1,T_2))_{t\in[0,T_1]}$ is a local martingale for any $0<T_1<T_2$.
\end{enumerate}
\end{prop}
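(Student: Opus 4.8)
The plan is to prove the cycle $(\ref{NA:prop:1})\Leftrightarrow(\ref{NA:prop:2})\Leftrightarrow(\ref{NA:prop:3})$ with the trajectory-wise relation \eqref{eq:general} as the hub. The equivalence $(\ref{NA:prop:1})\Leftrightarrow(\ref{NA:prop:2})$ is It\^o's formula plus uniqueness of the semimartingale drift. Fix $x\ge0$. Since $g$ is $C^{1,2}$ and $Y$ is the It\^o process of Definition \ref{FDR}, It\^o's formula gives
\begin{align*}
f_t(x)=g(x,Y_t)=g(x,Y_0)&+\int_0^t\Big(\nabla_yg(x,Y_s)\beta_s+\tfrac12\sum_{i,j=1}^d\sigma_{s,i,j}\sigma_{s,j,i}\,\partial_{y_i}\partial_{y_j}g(x,Y_s)\Big)\,ds\\
&+\int_0^t\nabla_yg(x,Y_s)\sigma_s\,dW_s.
\end{align*}
Localising along $\tau_n:=\inf\{t\ge0:|Y_t|\ge n\}$ and using the assumed boundedness of $\beta_\cdot 1_{\{|Y_\cdot|\in U\}}$ and $\sigma_\cdot 1_{\{|Y_\cdot|\in U\}}$ together with continuity of $\nabla_yg$ on compacts, the stochastic integral is a local martingale, so the $ds$-integrand above is a version of the drift density of $f_\cdot(x)$. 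As $\partial_xf_t(x)=\partial_xg(x,Y_t)$ and $x\mapsto f_t(x)$ is automatically $C^1$ because $g\in C^{1,2}$, RN at this fixed $x$ is precisely the equality in \eqref{eq:general} holding $\lambda_1\otimes\mathbb P$-almost everywhere, and conversely \eqref{eq:general} forces $\mathrm{drift}(f(x))=\partial_xf_t(x)\,dt$. The only point needing care is upgrading "a null set $N_x$ for each $x$" to "a single null set $N$": I would first establish \eqref{eq:general} for rational $x$, take the countable union of exceptional sets, and extend to all $x\ge0$ by continuity in $x$ of $\partial_xg$, $\nabla_yg$ and $\partial_{y_i}\partial_{y_j}g$ (and right-continuity of the paths of $Y$).

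For $(\ref{NA:prop:2})\Rightarrow(\ref{NA:prop:3})$ I would pass to the fixed-maturity picture. Applying It\^o's formula to $F_t(\mu)=f_t(\mu-t)=g(\mu-t,Y_t)$ for fixed $\mu$ yields
\begin{align*}
\mathrm{drift}(F(\mu))=\Big(-\,\partial_xg(\mu-t,Y_t)+\nabla_yg(\mu-t,Y_t)\beta_t+\tfrac12\sum_{i,j=1}^d\sigma_{t,i,j}\sigma_{t,j,i}\,\partial_{y_i}\partial_{y_j}g(\mu-t,Y_t)\Big)dt,
\end{align*}
whose density vanishes $\lambda_1\otimes\mathbb P$-a.e.\ on $[0,\mu]$ exactly by \eqref{eq:general} applied at $x=\mu-t$; hence $F_\cdot(\mu)$ is a local martingale on $[0,\mu]$. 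Then $F_t(T_1,T_2)=\frac{1}{T_2-T_1}\int_{T_1}^{T_2}F_t(\mu)\,d\mu$ is a local martingale on $[0,T_1]$ by a stochastic Fubini argument (the same localisation $\tau_n$ furnishes uniform-in-$\mu$ bounds on $[0,T_2]\times\{|y|\le n\}$). For $(\ref{NA:prop:3})\Rightarrow(\ref{NA:prop:2})$ I would reverse this: the drift density of $F_\cdot(T_1,T_2)$ equals $\frac{1}{T_2-T_1}\int_{T_1}^{T_2}$ of the drift density of $F_\cdot(\mu)$ (Fubini on the finite-variation part), and a continuous local martingale of finite variation is constant, so $\int_{T_1}^{T_2}\big(-\partial_xg(\mu-t,Y_t)+\nabla_yg(\mu-t,Y_t)\beta_t+\tfrac12\sum_{i,j}\sigma_{t,i,j}\sigma_{t,j,i}\partial_{y_i}\partial_{y_j}g(\mu-t,Y_t)\big)\,d\mu=0$ for $\lambda_1\otimes\mathbb P$-a.e.\ $(t,\omega)$ with $t\le T_1$, for every $0<T_1<T_2$. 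Since $\mu\mapsto$ (this integrand) is continuous, vanishing of its integral over every subinterval of $[t,\infty)$ forces it to vanish at every $\mu\ge t$, outside a fixed null set; the substitution $\mu=x+t$ then gives \eqref{eq:general}.

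I expect the main obstacle to be the combination of two bookkeeping issues rather than any single deep step: the stochastic Fubini interchange of $\int_{T_1}^{T_2}(\cdot)\,d\mu$ with the driving stochastic integral, and turning the parameter-indexed family of almost-everywhere statements (one per $x$, resp.\ per $\mu$) into one exceptional $\lambda_1\otimes\mathbb P$-null set. Both are resolved by localising along $\tau_n=\inf\{t:|Y_t|\ge n\}$, under which the boundedness hypotheses on $\beta_\cdot 1_{\{|Y_\cdot|\in U\}}$, $\sigma_\cdot 1_{\{|Y_\cdot|\in U\}}$ of Definition \ref{FDR} and the continuity of $g$ and its relevant derivatives on the compact set $[0,T_2]\times\{|y|\le n\}$ make every integrand bounded, combined with the continuity-in-parameter argument that removes the parameter-dependence of the null sets. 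Everything else reduces to It\^o's formula, the elementary identity $\int_{T_1}^{T_2}\partial_xg(\mu-t,Y_t)\,d\mu=f_t(T_2-t)-f_t(T_1-t)$ (the cancellation of boundary terms that occurs if one instead differentiates the moving integration limits directly in the time-to-maturity picture), and uniqueness of the drift.
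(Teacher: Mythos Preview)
Your argument is correct and follows the same overall It\^o-plus-drift-uniqueness skeleton as the paper, but there is one genuine difference worth noting. In the passage between (2) and (3) you apply It\^o's formula to each $F_\cdot(\mu)=g(\mu-t,Y_t)$ and then integrate over $\mu$, which forces you to invoke a stochastic Fubini theorem (and you rightly flag this as the main bookkeeping burden). The paper sidesteps this entirely: it defines the $C^{1,2}$ function $G(t,y):=\frac{1}{T_2-T_1}\int_{T_1}^{T_2}g(u-t,y)\,du$ first, observes that $F_t(T_1,T_2)=G(t,Y_t)$, and applies It\^o's formula once to $G$. Since differentiation under the $du$-integral is a purely deterministic operation on a $C^{1,2}$ function, the drift of $F_\cdot(T_1,T_2)$ comes out directly as $\frac{1}{T_2-T_1}\int_{T_1}^{T_2}\big(a_t(u-t)-\partial_xg(u-t,Y_t)\big)\,du$ with no stochastic interchange needed. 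This buys a cleaner proof; your route buys nothing extra here, though it is perfectly valid once the localisation bounds are in place. The null-set consolidation you describe (rational parameters plus continuity) matches what the paper does in the $(3)\Rightarrow(1)$ step.
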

\begin{proof}
We define $a_t(x) := \nabla_yg(x,Y_t)\beta_t + \frac12 \sum_{i,j=1}^d \sigma_{t,i,j}\sigma_{t,j,i}\partial_{y_i}\partial_{y_j}g(x,Y_t)$ for $t,x\geq 0$. It\^o's formula yields
$$ \mathrm{drift}(f(x)) = a_t(x) dt,$$
for any $x\geq 0$. Let $0<T_1<T_2$ and define the $C^{1,2}$-function 
$$G:[0,T_1]\times\mathbb R^d\rightarrow\mathbb R, \quad (t,y) \mapsto \frac{1}{T_2-T_1}\int_{T_1}^{T_2} g(u-t,y) du.$$

We find that $ F_t(T_1,T_2) = G(t,Y_t)$ for any $t\geq 0$ and It\^o's formula yields
$$ \mathrm{drift}(F(T_1,T_2)) = \frac{1}{T_2-T_1}\int_{T_1}^{T_2} \left(a_t(u-t) - \partial_x g(u-t,Y_t) \right) du. $$

$\underline{(1)\Rightarrow (2)}$: Assume RN holds. Then $\partial_x g(x,Y_t) = \partial_x f_t(x) = a_t(x)$,  $\lambda_1\otimes \mathbb P$-a.s.\ as claimed.

$\underline{(2)\Rightarrow (3)}$: Assume $a_t(x) = \partial_x f_t(x)$. Then we find that
$$ \mathrm{drift}(F(T_1,T_2)) = 0 dt,$$
and hence $(F_t(T_1,T_2))_{t\in [0,T_1]}$ is a local martingale.

$\underline{(3)\Rightarrow (1)}$: Assume that $(F_t(T_1,T_2))_{t\in [0,T_1]}$ is a local martingale for any $0<T_1<T_2$. Then we have
$$ \frac{1}{T_2-T_1}\int_{T_1}^{T_2} \left(a_t(u-t) - \partial_x f(u-t,Y_t) \right) du = 0,$$
for any $0\leq t\leq T_1<T_2$ where the exception null set depends on $T_1,T_2$. However, continuity of $T_1,T_2$ yields a single $\lambda_1\otimes \mathbb P$-null set $N\subseteq \mathbb R_+\times \Omega$ outside which we have
$$ \int_{T_1}^{T_2} \left(a_t(u-t) - \partial_x f(u-t,Y_t) \right) du = 0.  $$

Consequently, we find outside a possibly larger null set that
$$ a_t(x) = \partial_x f(x,Y_t),$$
as required and $x\mapsto f_t(x)$ is $C^1$ anyway because we assumed FDR.
\end{proof}

Now, we introduce the statistically consistency condition the main concept in our paper. Basically, the statistically consistency condition for a given function $g$ means that for every possible constant volatility, there is an It\^o process $Y$ with the given volatility which will lead to an arbitrage-free model if $f_t(x)=g(x, Y_t)$.
This is motivated from the fact that we would like to estimate the volatility from data and our model should work with whatever volatility we get out of the estimator.
\begin{defn}[Statistically consistency condition]\label{SCC}
	We assume that we have a model with {FDR}
	$$f_t(x) = g(x,Y_t),\quad x,t\geq 0.$$
	
	We say that the \textit{statistically consistency condition} (or {SCC}) holds, if for any $\sigma\in\mathbb R^{d\times d}$ there is an $\mathbb R^d$-valued progressively measurable process $\beta^\sigma$ with
	\begin{enumerate}
		\item $\int_0^t|\beta_s|ds<\infty, a.s.$ for any $t\geq 0$.
		\item \label{SCC:RN} The model $$f^{\sigma,y_0}_t(x)  := g(x,Y_t^{\sigma,y_0}), \quad  t,x\geq 0,$$
		defines an RN model where $Y_t^{\sigma,y_0} := y_0 + \int_0^t \beta^{\sigma,y_0}_s ds + \sigma W_t$, $t\geq 0$.
		\item \label{SCC:def:3} The random variables $\partial_{y_i}g(x,Y_t^{\sigma,y_0})$, $\partial_{y_i}\partial_{y_j}g(x,Y_t^{\sigma,y_0})$, $\partial_{y_i}g(x,Y_t^{\sigma,y_0})\beta^{\sigma,y_0}_t$ and $\beta^{\sigma,y_0}_t$ have finite absolute expectation for any $i,j=1,\dots,d$, $x,t\geq 0$.
		\item \label{locB}For any bounded set $B\subseteq\mathbb R^d$ there is $C>0$ such that $|\beta^{\sigma,y_0}_t|1_{\{Y^{\sigma,y_0}\in B\}} \leq C$.
	\end{enumerate}
	
	In the following we will often suppress the starting value $y_0$ and simply write $\beta^\sigma$, $Y^\sigma$ and $f^\sigma$ respectively.
\end{defn}

\begin{rem}
	In \eqref{SCC:RN} in the preceding definition it is reasonable to ask for the seemingly weaker condition NA instead of RN. However, this ensures the existence of an equivalent measure $\mathbb Q$ under which the model $f^\sigma = g(\cdot,Y^\sigma)$ is RN. Girsanov's theorem yields that 
	$$ dY^\sigma_t = \beta^{\sigma,\mathbb Q}_tdt + \sigma dW_t^{\mathbb Q},$$
	for some $\mathbb Q$-Brownian motion. In view of Proposition \ref{NA} we see that
	$$ d\bar{Y}_t = \beta^{\sigma}_tdt + \sigma dW_t,$$
	is then risk neutral under $\mathbb P$. This reveals that the existence of a $\mathbb P$-NA model represented with $g$ yields the existence of a $\mathbb P$-RN model represented with $g$ having the same constant diffusion coefficient.
	
	While the distinction between NA and RN does not play any role for (1) and (2) in the preceding Proposition \ref{NA} it has implications for (3) as these technical conditions would be required under the corresponding equivalent measure and the corresponding drift coefficient (for each possible choice of $\sigma$).
\end{rem}

In the SCC, we assume that for any given $\sigma$, the drift is simply a stochastic process. We would like to read Equation \eqref{eq:general} state by state. However, the given drift $\beta^\sigma$  is a stochastic process and does not need to be a function of the state. In the next two lemmas we show that there is a function $b$ such that Equation \eqref{eq:general} can be read state-by-state.

\begin{lem}\label{lem:drift}
	Let $f$ be an HJMM-model such that RN and SCC hold. For any $\sigma \in \mathbb{R}^{d\times d}$, let $\beta^\sigma$ be a drift corresponding to an RN and FDR model $f_t^\sigma(x)=g(x,Y_t^\sigma)$ with $dY_t^\sigma=\beta_t^\sigma dt+\sigma dW_t$. Let $t\geq 0$, $\sigma \in \mathbb{R}^{d\times d}$ and $b_t^\sigma:\mathbb{R}^d\rightarrow\mathbb{R}^d$ be measurable such that 
	$$b_t^\sigma(Y_t^\sigma):=\E [\beta_t^\sigma|Y_t^\sigma].$$

	Then we have 
	$$\partial_x g(x,y)=\nabla_y g(x,y)b_t^\sigma (y)+\frac12\sum_{i,j=1}^d \sigma_{i,j}\sigma_{j,i} \partial_{y_i}\partial_{y_j}g(x,y),$$
	for any $x\geq 0$ and any $(y,t)\in\mathbb R^d\times [0,\infty)$, $\mathbb P^{Y_t}\otimes \lambda$-a.s. 
\end{lem}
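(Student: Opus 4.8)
The plan is to apply Proposition~\ref{NA} to the constant-coefficient model $f^\sigma$, to condition the resulting drift identity on the state $Y^\sigma_t$, and then to read it off at $\mathbb{P}^{Y^\sigma_t}$-almost every point. Since, by hypothesis, $f^\sigma_t(x)=g(x,Y^\sigma_t)$ is an RN and FDR model with constant diffusion coefficient $\sigma$ and drift $\beta^\sigma$, Proposition~\ref{NA} furnishes a $\lambda_1\otimes\mathbb{P}$-null set $N\subseteq\mathbb{R}_+\times\Omega$ such that
\begin{align*}
\partial_x g(x,Y^\sigma_t)=\nabla_y g(x,Y^\sigma_t)\,\beta^\sigma_t+\frac12\sum_{i,j=1}^d\sigma_{i,j}\sigma_{j,i}\,\partial_{y_i}\partial_{y_j}g(x,Y^\sigma_t)
\end{align*}
for all $x\ge0$ and all $(t,\omega)\notin N$. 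By Fubini's theorem there is a $\lambda_1$-null set $T_0\subseteq\mathbb{R}_+$ such that, for every $t\notin T_0$, the section $N_t=\{\omega:(t,\omega)\in N\}$ is $\mathbb{P}$-null, so for such $t$ the displayed identity holds $\mathbb{P}$-almost surely.

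Next I would fix $x\ge0$ and $t\notin T_0$ and apply $\E[\,\cdot\mid Y^\sigma_t]$ to both sides. Condition~\eqref{SCC:def:3} of Definition~\ref{SCC}, which the drift $\beta^\sigma$ supplied by the SCC satisfies, guarantees that $\partial_{y_i}g(x,Y^\sigma_t)$, $\partial_{y_i}\partial_{y_j}g(x,Y^\sigma_t)$, $\beta^\sigma_t$ and $\partial_{y_i}g(x,Y^\sigma_t)\beta^\sigma_t$ are integrable, hence so is $\partial_xg(x,Y^\sigma_t)$ by the identity itself. The left-hand side and the last summand are $\sigma(Y^\sigma_t)$-measurable; in the middle term $\nabla_y g(x,Y^\sigma_t)$ is $\sigma(Y^\sigma_t)$-measurable and the product $\nabla_y g(x,Y^\sigma_t)\beta^\sigma_t$ is integrable, so ``taking out what is known'' applies and yields, $\mathbb{P}$-almost surely,
\begin{align*}
\partial_x g(x,Y^\sigma_t)&=\nabla_y g(x,Y^\sigma_t)\,\E[\beta^\sigma_t\mid Y^\sigma_t]+\frac12\sum_{i,j=1}^d\sigma_{i,j}\sigma_{j,i}\,\partial_{y_i}\partial_{y_j}g(x,Y^\sigma_t)\\
&=\nabla_y g(x,Y^\sigma_t)\,b^\sigma_t(Y^\sigma_t)+\frac12\sum_{i,j=1}^d\sigma_{i,j}\sigma_{j,i}\,\partial_{y_i}\partial_{y_j}g(x,Y^\sigma_t).
\end{align*}

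Finally I would transfer this to the desired pointwise statement. Denote by $H^\sigma_t(x,y)$ the difference $\partial_xg(x,y)-\nabla_yg(x,y)b^\sigma_t(y)-\frac12\sum_{i,j=1}^d\sigma_{i,j}\sigma_{j,i}\partial_{y_i}\partial_{y_j}g(x,y)$. The previous display says $H^\sigma_t(x,Y^\sigma_t)=0$ $\mathbb{P}$-almost surely, i.e.\ $\mathbb{P}^{Y^\sigma_t}(\{y:H^\sigma_t(x,y)\neq0\})=0$, and this holds for every $t\notin T_0$, hence for $\lambda_1$-almost every $t$. Pushing $\lambda_1\otimes\mathbb{P}$ forward under $(t,\omega)\mapsto(t,Y^\sigma_t(\omega))$ then shows $H^\sigma_t(x,y)=0$ holds $\mathbb{P}^{Y^\sigma_t}\otimes\lambda_1$-almost surely; since $x\ge0$ was arbitrary, this is the assertion of the lemma (with $\mathbb{P}^{Y_t}$ understood as the law $\mathbb{P}^{Y^\sigma_t}$). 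The content is routine; the two points that require care are the integrability invoked to pull $\nabla_yg(x,Y^\sigma_t)$ out of the conditional expectation — which is exactly what condition~\eqref{SCC:def:3} provides — and the bookkeeping needed to pass from the $\lambda_1\otimes\mathbb{P}$-null exceptional set of Proposition~\ref{NA} on $\mathbb{R}_+\times\Omega$ to a null set on $\mathbb{R}_+\times\mathbb{R}^d$ (one fixes measurable versions $b^\sigma_t$ for each $t$, or, equivalently, reads the conclusion as ``for $\lambda_1$-a.e.\ $t$, $\mathbb{P}^{Y^\sigma_t}$-almost surely''). There is no analytic difficulty: conditioning commutes with the drift identity precisely because the latter is affine in $\beta^\sigma_t$ with $\sigma(Y^\sigma_t)$-measurable coefficients.
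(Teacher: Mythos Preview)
Your proposal is correct and follows the same route as the paper: apply the drift identity of Proposition~\ref{NA} and condition on $Y^\sigma_t$, using the integrability in Definition~\ref{SCC}\eqref{SCC:def:3} to pull $\nabla_y g(x,Y^\sigma_t)$ out of the conditional expectation. The paper's proof is a two-line version of exactly this argument; your write-up simply makes the null-set bookkeeping and the ``take out what is known'' step explicit.
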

	\begin{proof}
		
		For $t,x\geq 0$ by conditioning on $Y_t^\sigma$ and using the formula from Proposition \ref{NA} we get
		\begin{align*}
		\partial_x g(x,Y_t^\sigma) dt=&\E\left[\mathrm{drift}(f(x))|Y_t^\sigma\right]\\
		=&\left\{\nabla_y g(x,Y_t^\sigma)b_t^\sigma(Y_t^\sigma)+\frac12\sum_{i,j=1}^d\sigma_{i,j}\sigma_{j,i} \partial_{y_i}\partial_{y_j}g(x,Y_t^\sigma)\right \}dt, \quad \mathbb P\text{-a.s}.
		\end{align*}		
		
	\end{proof}

\begin{lem}\label{lem:b2} 
	Let $f$ be an HJMM-model such that RN and SCC hold. Let $\sigma\in\mathbb R^{d\times d}$ such that the quadratic form $\sigma\sigma^\top$ is positive definite. Then there is a measurable function
	$$ b^\sigma:\mathbb R^d\rightarrow\mathbb R^d,$$ and a Lebesgue-null set $N\subseteq\mathbb R^d$ such that
	\begin{align}\label{sigmaid}
	\partial_x g(x,y) = \nabla_yg(x,y)b^\sigma(y) + \frac12\sum_{i,j=1}^d\sigma_{i,j}\sigma_{j,i} \partial_{y_i}\partial_{y_j}g(x,y),
	\end{align}
	for any $x\geq 0$ and any $y\in\mathbb R^d\setminus N$. 
	
	Moreover, $b^\sigma$ is locally bounded no matter the choice of $\sigma$.
\end{lem}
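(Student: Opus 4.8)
The plan is to extract, for a fixed $\sigma$ with $\sigma\sigma^\top$ positive definite, a single $t$-independent function $b^\sigma$ from the family $(b_t^\sigma)_{t\ge0}$ of Lemma~\ref{lem:drift}. Set $h^\sigma(x,y):=\partial_x g(x,y)-\frac12\sum_{i,j}\sigma_{i,j}\sigma_{j,i}\partial_{y_i}\partial_{y_j}g(x,y)$, continuous in $x$ by the $C^{1,2}$-assumption, and for each $y$ consider the linear system $\nabla_y g(x,y)\,v=h^\sigma(x,y)$, $x\ge0$, in $v\in\mathbb R^d$; by continuity in $x$ it is equivalent to its restriction to a countable dense set $D\subseteq[0,\infty)$. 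Let $G^\sigma$ be the set of $y$ for which this system is solvable, set $V(y):=\mathrm{Span}\{\nabla_y g(x,y):x\ge0\}$, and on $G^\sigma$ define $b^\sigma(y)$ as the unique solution lying in $V(y)$ (equivalently, the minimal-norm solution, equivalently $P_{V(y)}w$ for any solution $w$), and $b^\sigma:=0$ off $G^\sigma$; this is manifestly independent of $t$. For measurability I would use the Moore--Penrose pseudoinverses of the finite sections of the system (first $n$ elements of $D$): these depend Borel-measurably on $y$, being pointwise limits of continuous maps, and on $G^\sigma$ the corresponding least-squares solutions are eventually stationary in $n$ and equal $b^\sigma(y)$, so $b^\sigma$ is a pointwise limit of Borel maps.

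Next, Lemma~\ref{lem:drift} supplies an exceptional set $\hat N\subseteq\mathbb R^d\times[0,\infty)$, null for the measure $\mathbb P^{Y_t^\sigma}(dy)\,dt$, outside which $\nabla_y g(x,y)\,b_t^\sigma(y)=h^\sigma(x,y)$ for every $x\ge0$ (combine the countably many $x\in D$ and use continuity in $x$). In particular $\mathbb P^{Y_{t_0}^\sigma}(\hat N_{t_0})=0$ for $\lambda_1$-a.e.\ $t_0>0$; fix such a $t_0$. For $y\notin\hat N_{t_0}$ the vector $b_{t_0}^\sigma(y)$ solves the system, so $\mathbb R^d\setminus G^\sigma\subseteq\hat N_{t_0}$, and it remains to show $\lambda_d(\hat N_{t_0})=0$. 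For this it suffices that $\lambda_d\ll\mathbb P^{Y_{t_0}^\sigma}$, and this is the crux --- and the only place where $\sigma\sigma^\top>0$ really enters. I would localise at $\tau_R:=\inf\{t:|Y_t^\sigma|\ge R\}$, note that \eqref{locB} makes the stopped drift $\beta^\sigma_\cdot 1_{[0,\tau_R]}$ bounded, apply Girsanov's theorem (Novikov's condition is trivial) to get an equivalent $\mathbb Q_R$ on $\mathcal F_{t_0}$ under which $Y^\sigma_{\cdot\wedge\tau_R}=y_0+\sigma W^{\mathbb Q_R}_{\cdot\wedge\tau_R}$ is a stopped non-degenerate Brownian motion, and then, for Borel $A$ with $\lambda_d(A)>0$ and $\rho_{t_0}$ the everywhere-positive Gaussian density of $y_0+\sigma W^{\mathbb Q_R}_{t_0}$, estimate $\mathbb Q_R(Y^\sigma_{t_0}\in A,\ \tau_R>t_0)\ge\int_A\rho_{t_0}(z)\,dz-\mathbb Q_R(\tau_R\le t_0)>0$ for $R$ large, the integral being a fixed positive number and $\mathbb Q_R(\tau_R\le t_0)\to0$ as $R\to\infty$; since $\mathbb Q_R\sim\mathbb P$ on $\mathcal F_{t_0}$, this gives $\mathbb P(Y^\sigma_{t_0}\in A)>0$. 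Hence $\lambda_d(\hat N_{t_0})=0$, so $\lambda_d(\mathbb R^d\setminus G^\sigma)=0$, and $N:=\mathbb R^d\setminus G^\sigma$ does the job for \eqref{sigmaid}.

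Finally, for local boundedness fix a bounded set $B$: \eqref{locB} gives $|\beta_{t_0}^\sigma|\le C_B$ on $\{Y_{t_0}^\sigma\in B\}$, so conditioning yields $|b_{t_0}^\sigma(y)|=|\E[\beta_{t_0}^\sigma\mid Y_{t_0}^\sigma=y]|\le C_B$ for $\mathbb P^{Y_{t_0}^\sigma}$-a.e.\ $y\in B$, hence, by $\lambda_d\ll\mathbb P^{Y_{t_0}^\sigma}$, for $\lambda_d$-a.e.\ $y\in B$; as $b^\sigma(y)=P_{V(y)}b_{t_0}^\sigma(y)$ for $\lambda_d$-a.e.\ $y$, also $|b^\sigma(y)|\le C_B$ for $\lambda_d$-a.e.\ $y\in B$. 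Applying this to the balls $B_n:=\{|y|<n\}$ and redefining $b^\sigma:=0$ on the Lebesgue-null union of the corresponding exceptional sets (enlarging $N$ accordingly) gives one version of $b^\sigma$ that is locally bounded on all of $\mathbb R^d$ and still satisfies \eqref{sigmaid}; the bounds are those of \eqref{locB}, finite for each $\sigma$. I expect the absolute-continuity step $\lambda_d\ll\mathbb P^{Y_{t_0}^\sigma}$ --- the passage from ``$\mathbb P^{Y_t^\sigma}$-a.e.'' to ``$\lambda_d$-a.e.'' --- to be the main obstacle, as it is exactly where $\sigma\sigma^\top>0$ must be used and where the only-locally-bounded drift must be controlled via localisation and a change of measure.
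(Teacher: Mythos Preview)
Your proof is correct and follows the same line as the paper's: invoke Lemma~\ref{lem:drift}, then upgrade the exceptional set from $\mathbb P^{Y_{t_0}^\sigma}$-null to Lebesgue-null via the absolute continuity $\lambda_d\ll\mathbb P^{Y_{t_0}^\sigma}$; your inline Girsanov-plus-localisation argument for the latter is essentially the content of the paper's Lemma~\ref{l:locb-ac} in the appendix.

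The one genuine difference is how $b^\sigma$ is produced. The paper takes the shortest route: it simply sets $b^\sigma:=b_{t}^\sigma$ for a single well-chosen $t>0$ (any $t$ for which the $t$-section of the null set in Lemma~\ref{lem:drift} is $\mathbb P^{Y_t^\sigma}$-null), so measurability is immediate since $b_t^\sigma$ is already a measurable version of $\E[\beta_t^\sigma\mid Y_t^\sigma=\cdot\,]$, and local boundedness follows in one line from \eqref{locB}. You instead construct $b^\sigma$ intrinsically as the minimal-norm solution of the linear system $\nabla_y g(x,y)\,v=h^\sigma(x,y)$, $x\in D$, argue measurability via Moore--Penrose pseudoinverses of finite sections, and only afterwards identify it with $P_{V(y)}b_{t_0}^\sigma(y)$. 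This buys a canonical, $t$-independent choice and is more scrupulous about measurability, but it is heavier than needed: once you have shown $\lambda_d\ll\mathbb P^{Y_{t_0}^\sigma}$, the paper's direct choice $b^\sigma:=b_{t_0}^\sigma$ already satisfies \eqref{sigmaid} off a Lebesgue-null set and is locally bounded $\lambda_d$-a.e., so the pseudoinverse machinery and the projection onto $V(y)$ can be dispensed with entirely.
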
	
	\begin{proof}
		Let $x\geq 0$, Lemma \ref{lem:drift} yields that there is a time $t>0$ such that
		$$\partial_x g(x,y)=\nabla_y g(x,y)b^{\sigma}(y)+\frac12\sum_{i,j=1}^d \sigma_{i,j}\sigma_{j,i} \partial_{y_i}\partial_{y_j}g(x,y), \quad \mathbb P^{Y_t}\text{-a.s.},$$
		where $b^\sigma :=b_t^\sigma$ and the latter is given in Lemma \ref{lem:drift}. 
		
		Lemma \ref{l:locb-ac} yields that the set 
		$$ N_x := \left\{ y\in\mathbb R^d: \partial_x g(x,y) \neq \nabla_yg(x,y)b^\sigma(y) + \frac12\sum_{i,j=1}^d\sigma_{i,j}\sigma_{j,i} \partial_{y_i}\partial_{y_j}g(x,y) \right\}, $$
		has zero Lebesgue measure for any $x\geq 0$. Define $N:= \bigcup_{x\geq 0,x\in\mathbb Q} N_x$ which satisfies $\lambda_d(N)=0$. The claim follows due to continuity in $x$.
		
	    By \eqref{locB} of Definition \eqref{SCC}, for any bounded set $B\subseteq\mathbb R^d$ there is $C>0$ such that $|\beta^{\sigma}_t|1_{\{Y^{\sigma}\in B\}} \leq C$, whence $b^\sigma \leq C\E[1_{Y^\sigma \in B}|Y_t^\sigma] \leq C $ which is locally bounded no matter the choice of $\sigma$.
	\end{proof}

Before presenting our main result, we recall quasi-exponential functions and some of their properties.

\begin{defn}[Quasi-exponential function]\label{df-QuasiExponential}
	A quasi-exponential (or {QE}) function is by definition of the form 
	\begin{align*}
	f(x)=\sum_j e^{\alpha x}\left[p_j(x)\cos(\omega_j x)+q_j(x)\sin(\omega_j x)\right],
	\end{align*}
	where $\alpha$, $\omega_j$ are real numbers, whereas $p_j,q_j$ are real polynomials.
\end{defn}

\begin{lem}\label{quasi-exponential}
	The following properties hold for the quasi-exponential functions:
	\begin{itemize}
		\item A function is QE if and only if it is a component of the solution of a vector-valued linear ODE with constant coefficients.
		\item A function is QE if and only if it can be written as $f(x) = c(e^{Ax}b)$ where $b,c\in\mathbb R^n$ and $A\in\mathbb R^{n\times n}$.
	\end{itemize}
	\begin{proof}
		See \cite[Lemma 2.1]{bjork}.
	\end{proof}
\end{lem}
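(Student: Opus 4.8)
The plan is to prove the two equivalences simultaneously by establishing the cycle of implications ``$f$ is QE $\Rightarrow$ $f$ is a coordinate of the solution of a vector-valued linear constant-coefficient ODE $\Rightarrow$ $f(x) = c(e^{Ax}b)$ $\Rightarrow$ $f$ is QE'', which is the route taken in \cite[Lemma 2.1]{bjork}. For the first implication, write $f(x) = \sum_j e^{\alpha_j x}\bigl(p_j(x)\cos(\omega_j x) + q_j(x)\sin(\omega_j x)\bigr)$. Each summand is a real linear combination of functions $x^m e^{\lambda x}$ with $\lambda \in \{\alpha_j \pm i\omega_j\}$, and $x^m e^{\lambda x}$ is annihilated by $(\tfrac{d}{dx} - \lambda)^{m+1}$; multiplying these operators together and pairing each complex root with its conjugate so the coefficients stay real, I obtain a monic real polynomial $P$ of some degree $n$ with $P(\tfrac{d}{dx}) f = 0$. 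Putting this scalar ODE in companion form gives $C \in \mathbb{R}^{n\times n}$ with $y := (f, f', \dots, f^{(n-1)})^\top$ solving $y' = Cy$, so $f = y_1$.

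For the second implication, if $y' = Ay$ with $y(0) = y_0$ then $y(x) = e^{Ax}y_0$, and the $i$-th coordinate is $f(x) = e_i(e^{Ax}y_0)$, already of the claimed form with $c = e_i$, $b = y_0$ (a change of basis handles the converse reading of ``coordinate'' if one insists on it, the case $c = 0$ being trivial). For the third implication, given $f(x) = c(e^{Ax}b)$ I would bring $A$ to Jordan form $A = SJS^{-1}$, so $e^{Ax} = S e^{Jx} S^{-1}$; for a single block with eigenvalue $\mu = \alpha + i\omega$ and nilpotent part $N$ one has $e^{Jx} = e^{\mu x}\sum_{m\geq 0}\tfrac{x^m}{m!}N^m$, a finite sum, so every entry of $e^{Jx}$ is a polynomial times $e^{\mu x}$. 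Grouping conjugate blocks and passing to a real basis, every entry of $e^{Ax}$ is a real linear combination of terms $x^m e^{\alpha x}\cos(\omega x)$ and $x^m e^{\alpha x}\sin(\omega x)$, hence so is $c(e^{Ax}b)$ --- exactly the QE form, which closes the cycle.

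The only genuinely technical step is the third one: extracting the real and imaginary parts of $S e^{Jx}S^{-1}$ for conjugate pairs of Jordan blocks and verifying that the polynomial degrees and the frequencies $\omega$ that appear fit the template of Definition \ref{df-QuasiExponential}; the first two steps are purely formal. I would also remark that Definition \ref{df-QuasiExponential} must be read with one exponent $\alpha_j$ per summand, as in \cite{bjork}, since with a single common exponent the class would not be closed under the operations used here and the second characterisation would fail.
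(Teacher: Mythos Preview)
Your argument is correct and is precisely the standard cycle of implications one finds in \cite[Lemma~2.1]{bjork}; the paper itself does not give a proof but simply refers to that reference, so you have in fact supplied what the paper omits. Your closing remark about reading Definition~\ref{df-QuasiExponential} with one exponent $\alpha_j$ per summand is also well taken: the single $\alpha$ in the displayed formula is a typographical slip, and the intended class is exactly the one you work with.
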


Next we present our key result which will be used for the proof of our main result Theorem \ref{Main} below. It shows that if {SCC} holds, then $g$ has affine form. From a practical perspective, this means that the only “good” choices for $g$ are affine.

\begin{prop}\label{p:Main}
	Assume that RN holds (in the sense of Definition \ref{t:NA}) and that SCC holds (in the sense of Definition \ref{SCC}).
	
	Then the model has affine geometry in the sense of Definition \ref{d:flat}.
		
	Moreover, the corresponding functions $c:\mathbb R_+\rightarrow \mathbb R$, $u:\mathbb R_+\rightarrow \mathbb R^d$ are such that $c, u_1,\dots,u_d$ are QE.
\end{prop}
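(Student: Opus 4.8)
The plan is to exploit the freedom of $\sigma$ in the consistency equation \eqref{sigmaid} of Lemma~\ref{lem:b2} to pin down the $x$-dependence of $g$, and then to recover the affine and quasi-exponential structure by an ordinary differential equation argument in the $y$-variable.

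\emph{Step 1 (tangent relations).} As $\sigma$ ranges over $\mathbb R^{d\times d}$ the matrix $\sigma\sigma^\top$ exhausts the symmetric positive definite matrices. Fix a countable family $\{\sigma_n\}$ for which the $a_n:=\sigma_n\sigma_n^\top$ include $I$ together with $I+\varepsilon s$ for $\varepsilon$ small and $s$ running through a linear basis of the symmetric matrices, let $N$ be the union of the Lebesgue-null sets attached to the $\sigma_n$ by Lemma~\ref{lem:b2}, and fix $y\notin N$. Subtracting \eqref{sigmaid} for $a_n=I+\varepsilon s$ and $a_n=I$ and dividing by $\varepsilon$ writes $x\mapsto\sum_{i,j}s_{ij}\partial_{y_i}\partial_{y_j}g(x,y)$ as $\nabla_yg(\cdot,y)$ paired with the \emph{locally bounded} vector $\tfrac{1}{\varepsilon}\bigl(b^{\sigma}(y)-b^{\sigma'}(y)\bigr)$ (with $\sigma,\sigma'$ the two choices); as $s$ runs through the basis, every $x\mapsto\partial_{y_i}\partial_{y_j}g(x,y)$ — and then, via \eqref{sigmaid} for $a_n=I$, also $x\mapsto\partial_x g(x,y)$ — lies in $E_y:=\operatorname{span}\{\partial_{y_1}g(\cdot,y),\dots,\partial_{y_d}g(\cdot,y)\}\subseteq C^1(\mathbb R_+)$, with coefficient functions that may be chosen measurable and locally bounded in $y$.

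\emph{Step 2 ($E_y$ is constant).} Fix $y_0$; for arbitrary $y$ pick a piecewise linear path $\gamma:[0,1]\to\mathbb R^d$ from $y_0$ to $y$ whose image meets $N$ in a $\lambda_1$-null set (possible by Fubini after passing through a generic intermediate point). For each fixed $x$ the maps $t\mapsto\partial_{y_k}g(x,\gamma(t))$ are absolutely continuous and satisfy, for a.e.\ $t$, $\frac{d}{dt}\partial_{y_k}g(x,\gamma(t))=\sum_l B_{kl}(t)\,\partial_{y_l}g(x,\gamma(t))$, where $B$ is integrable by Step~1 and independent of $x$; uniqueness for this linear system gives $\partial_{y_k}g(\cdot,\gamma(t))=\sum_l\Phi_{kl}(t)\,\partial_{y_l}g(\cdot,\gamma(0))$ with $\Phi(t)\in\mathrm{GL}_d$, so $E_{\gamma(t)}=E_{\gamma(0)}$. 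Hence $E_y=E_{y_0}=:E$ and $r:=\dim E\le d$; since $E$ is finite dimensional, hence closed, the inclusions $\partial_{y_k}g(\cdot,y)\in E$ and $\partial_x g(\cdot,y)\in E$ now hold for \emph{every} $y$ by continuity in $y$. The main difficulty of the whole proof is concentrated in this step: the exceptional null set and the possible drops of $\dim E_y$ must be circumvented, and it is precisely the local boundedness of the drifts $b^\sigma$ from Lemma~\ref{lem:b2} that keeps the coefficients $B_{kl}$ integrable so that no blow-up occurs.

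\emph{Step 3 (affine form).} Fix a basis $u_1,\dots,u_r$ of $E$ and points $x_1,\dots,x_r$ with $\det\bigl(u_l(x_i)\bigr)\ne0$; then the coefficients in $\partial_{y_k}g(x,y)=\sum_l A_{kl}(y)u_l(x)$ are obtained by inverting this matrix, so $A_{kl}\in C^1(\mathbb R^d)$, and equality of mixed $y$-derivatives gives $\partial_{y_m}A_{kl}=\partial_{y_k}A_{ml}$. Thus each one-form $\sum_k A_{kl}\,dy_k$ is closed, hence exact on the simply connected $\mathbb R^d$, say $\sum_k A_{kl}\,dy_k=d\tilde A_l$ with $\tilde A_l(y_0)=0$. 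Then $y\mapsto g(x,y)-\sum_l u_l(x)\tilde A_l(y)$ has vanishing $y$-gradient, whence $g(x,y)=c(x)+\sum_{l=1}^r u_l(x)\tilde A_l(y)$ with $c(x):=g(x,y_0)$; padding $u$ and $\tilde A$ by zeros to $\mathbb R^d$-valued maps yields \eqref{eq:affine geometry} of Definition~\ref{d:flat}. For quasi-exponentiality, note that $c'=\partial_x g(\cdot,y_0)\in E$ and $\partial_x g(\cdot,y)=c'+\sum_l u_l'\tilde A_l(y)\in E$ give $\sum_l u_l'(x)\tilde A_l(y)\in E$ for all $y$; since $D\tilde A(y)=(A_{kl}(y))_{k,l}$ has rank $r$ (its rows span $E_y=E$), $\tilde A$ is a submersion and $\tilde A(\mathbb R^d)$ spans $\mathbb R^r$, so choosing $y_1,\dots,y_r$ with $(\tilde A_l(y_i))$ invertible and inverting shows $u_l'\in E$ for every $l$. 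Hence $c',u_1',\dots,u_r'$ are constant-coefficient linear combinations of $u_1,\dots,u_r$, i.e.\ $(c,u_1,\dots,u_r)$ solves a linear ODE with constant coefficients, and Lemma~\ref{quasi-exponential} shows that $c,u_1,\dots,u_r$ (and the padded zero components) are quasi-exponential.
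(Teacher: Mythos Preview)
Your argument is correct and tracks the paper's proof closely: both first compare the consistency equation \eqref{sigmaid} at finitely many well-chosen $\sigma$'s to exhibit all $\partial_{y_i}\partial_{y_j}g(\cdot,y)$ and $\partial_x g(\cdot,y)$ as $\nabla_y g(\cdot,y)$ paired with locally bounded, measurable coefficient vectors, and both finish by reading off a constant-coefficient linear ODE for $u$ (your Step~3, the paper's argument for $v'=Bv$) to get quasi-exponentiality.

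The only substantive difference is in how the affine representation is extracted from those tangent relations. The paper invokes its Proposition~\ref{tcdf}, which treats $g(x,\cdot)$ for each fixed $x$ as an element of the solution space $\mathcal S$ of the second-order system $\partial_i\partial_j f=\nabla f\cdot\eta_{ij}$, and uses a Gr\"onwall estimate to show $\dim\mathcal S\le d+1$; the function $A$ then comes from an orthonormal basis of $\mathcal S_0$. You take the dual viewpoint: you fix $y$, look at the span $E_y\subseteq C(\mathbb R_+)$ of the $\partial_{y_k}g(\cdot,y)$, and show $E_y$ is independent of $y$ by integrating the linear system along a piecewise linear path that avoids the null set almost everywhere (which is exactly the Gr\"onwall/Carath\'eodory argument in disguise, since the coefficient matrix $B$ does not depend on $x$). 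Your construction of $A$ via the Poincar\'e lemma for the closed $1$-forms $\sum_k A_{kl}\,dy_k$ is a concrete alternative to the paper's abstract basis construction and has the mild advantage of making the $C^2$-regularity of $A$ transparent. Neither route is materially shorter or more general; they are two packagings of the same linear-ODE uniqueness idea.
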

\begin{proof}
	
	For $i=1,\dots, d$ and denote by $e_{i,i}$ the $d\times d$-matrix with zeros everywhere except at position $(i,i)$ where it is $1$. By $e_{i,j}$ with $i,j=1,\dots,d$, we denote the $d\times d$-matrix with only zeroes except at position $(i,j)$ as well as at position $(j,i)$ where the entry is $1$. $I$ denotes the identity matrix.
	
	By Lemma \ref{lem:b2}, for any $\sigma\in\mathbb R^{d\times d}$ positive definite, there is a measurable function $b^\sigma:\mathbb R^d\rightarrow\mathbb R^d$ and a Lebesgue null set $N_\sigma\subseteq\mathbb R^d$ such that Equation \eqref{sigmaid} holds. In particular, we find via comparing with $\sigma=I$ and $\sigma= I+e_{i,i}$ that 
	$$ \partial_x g(x,y) = \nabla_y g(x,y)b^I(y)+\frac{1}2 \sum_{j=1}^d\partial_{y_j}^2g(x,y), $$
	$$ \partial_x g(x,y) = \nabla_yg(x,y)b^{I+e_{i,i}} + \frac32 \partial_{y_i}^2g(x,y) + \frac{1}2 \sum_{j=1}^d\partial_{y_j}^2g(x,y), $$
	for any $x\geq 0$ and any $y\in\mathbb R^d\setminus(N_I\cup N_{I+e_{i,i}})$. Then we have
	$$  \nabla_yg(x,y)b^{I} + \frac{1}2 \sum_{j=1}^d\partial_{y_j}^2g(x,y) = \nabla_yg(x,y)b^{I+e_{i,i}} + \frac32 \partial_{y_i}^2g(x,y) + \frac{1}2 \sum_{j=1}^d\partial_{y_j}^2g(x,y).$$
	
	Hence,
	$$ \partial_{y_i}^2g(x,y) = \nabla_yg(x,y)\left(\frac23(b^{I}-b^{I+e_{i,i}})\right) = \nabla_yg(x,y)\eta_{i,i}(y),$$
	where $\eta_{i,i}:\mathbb R^d\rightarrow \mathbb R^d$, $\eta_{i,i}(y) := \frac23(b^{I}-b^{I+e_{i,i}})$. $\eta_{i,i}$ is measurable in $y$ and $\eta_{i,i}$ is locally bounded by its construction as $b^I$ and $b^{I+ e_{i,i}}$ are locally bounded.
	
	Repeating for $\sigma = I$ and $\sigma = I + e_{i,j}$ where $i,j=1,\dots,d$ and $i\neq j$, we get the following result as before
	$$ \nabla_yg(x,y)b^{I} + \frac{1}2 \sum_{k=1}^d\partial_{y_k}^2g(x,y) = \nabla_yg(x,y)b^{I+e_{i,j}} + \partial_{y_i}\partial_{y_j}g(x,y) + \frac{1}2 \sum_{k=1}^d\partial_{y_k}^2g(x,y),$$
	
	Hence,
	$$ \partial_{y_i}\partial_{y_j}g(x,y) = \nabla_yg(x,y)\left(b^I-b^{I+e_{i,j}}\right) = \nabla_yg(x,y)\eta_{i,j}(y),$$
	where $\eta_{i,j}:= b^I-b^{I+e_{i,j}}$, $\eta_{i,j}$ is measurable in $y$ and $\eta_{i,j}$ is locally bounded by its construction as $b^I$ and $b^{I+e_{i,j}}$ are locally bounded.
	
	Using Equation \eqref{sigmaid} with $\sigma=I$ and multiplying it by $4$ and subtracting Equation \eqref{sigmaid} with $\sigma=2 I$ yields
	$$ 3\partial_xg(x,y) = 4\nabla_yg(x,y)b^{I} - \nabla_yg(x,y)b^{2I}  = 3\nabla_yg(x,y)\gamma(y),$$
	where $\gamma:\mathbb R^d\rightarrow \mathbb R^d, \gamma(y) := (4b^{I}-b^{2I})/3$.
	
	Hence we get for any $i,j=1,\dots,d$, $x\geq 0$, $y\in\mathbb R^d\setminus N$ where $N := N_I \cup N_{2I}  \cup \bigcup_{i,j=1}^d  N_{I+e_{i,j}}$
	\begin{align} 
	\partial_{y_i}\partial_{y_j}g(x,y) &= \nabla_yg(x,y)\eta_{i,j}(y), \label{eq: partial y}\\
	\partial_xg(x,y) &= \nabla_yg(x,y)\gamma(y). \label{eq:partial x}
	\end{align}
	
	Due to Equation \eqref{eq: partial y} and Proposition \ref{tcdf}, there exists a twice continuously differentiable function $A:\mathbb R^d\rightarrow \mathbb R^d$, where $A(0)=0$ such that
	$$ g(x,y) = g(x,0) + \nabla_y g(x,0) A(y),\quad  x \geq 0,\ y\in\mathbb R^d\setminus N . $$	
	
	We define $c:\mathbb R\rightarrow \mathbb R, c(x) :=  g(x,0)$, $u:\mathbb R\rightarrow \mathbb R^d, u(x) := \nabla_yg(x,0)$, $a:\mathbb R^d \rightarrow \mathbb R^{d \times d} $,  $a(y) :=D A(y)$, then
	\begin{align*}
	g(x,y) = c(x) + u(x)A(y),\quad  x \geq 0,\ y\in\mathbb R^d \setminus N. \end{align*}
    Note that $c,u$ are $C^1$-functions by definition.

	Let $V$ be the span of $\{ A(y) : y\in\mathbb R^d\setminus N\} $ and $k:= \dim(V)$. Let $\Pi: \mathbb R^d\rightarrow V$ be the orthogonal projection with respect to the standard scalar product. 
	Define $v(x):= \Pi\left(u(x)\right)$. Then
	\begin{align}\label{eq:v}
	g(x,y) = c(x) + v(x)A(y), \quad  x\geq 0,\ y\in\mathbb R^d \setminus N.
	\end{align}
	
	Consequently, $v$ is continuous. Equation \eqref{eq:v} yields that $v$ is continuously differentiable because $g$ and $c$ are continuously differentiable in $x$.
	
    We take the derivative of both sides of Equation \eqref {eq:v} with respect to $x$, and find by Equation \eqref{eq:partial x} that
	\begin{align}\label{eq:va}
	c'(x)+v'(x)A(y)=v(x)a(y)\gamma(y),\quad  x\geq 0,\ y\in\mathbb R^d\setminus N.
	\end{align}
	
	Let $U$ be the span of $\{ v(x) : x\geq 0 \}$ and $\Gamma:\mathbb R^d\rightarrow U$ be the orthogonal projection to $U$.
	
	Continuity of the left-hand side of Equation \eqref{eq:va} yields that
	$$w := \lim_{ y\rightarrow 0, y \in \mathbb R^d\setminus N} \Gamma\left(a(y)\gamma(y)\right),$$
	is well defined. Consequently, we have
	$$c'(x) = \lim_{y\rightarrow 0} \left(c'(x)+v'(x)A(y)\right) = v(x)w,$$
	and 
	$$c(x) = c(0) + \int_0^x v(z) dz w.$$
	
	Taking it into Equation \eqref{eq:va} we find that
	$$  v(x)w + v'(x)A(y) = v(x)a(y)\gamma(y), \quad x\geq 0,\  y\in\mathbb R^d\setminus N.$$
	
	There exist $y_1,\dots,y_k\in\mathbb R^d\setminus N$ such that $A(y_1),\dots, A(y_k)$ are orthogonal in $V$ and, hence, $A(y_1),\dots, A(y_k)$ is a basis of $V$.
	
	We have
	$$ v'(x) A(y_j) = v(x)a(y_j)\gamma(y_j) - v(x)w, \quad x\geq 0, j=1,\dots,k.$$
	
	By linearity there is a linear map $\tilde B:V \rightarrow V$ such that
	$$ v'(x) = \tilde Bv(x), \quad x\geq 0. $$
	
	Let $B:\mathbb R^d\rightarrow \mathbb R^d, y \mapsto \tilde B\Pi y $ which is a linear extension of $\tilde B$ to $\mathbb R^d$. Hence, we have
	$$ v'(x) = Bv(x), \quad x \geq 0. $$
	
	Lemma \ref{quasi-exponential} yields that $v_1,\dots,v_d$ are quasi-exponential functions. Since $c(x) = c(0) + \int_0^x v(z) dz w$ we find that $c$ is a quasi-exponential function.
	
	Hence, we have
	$$ g(x,y) = c(x) + v(x)A(y),\quad  x \geq 0, y\in\mathbb R^d, $$
	as claimed.
	
	Note that $u(x) = \nabla_yg(x,0) = DA(0)v(x)$ and $v(x) = \Pi(u(x)) = \Pi\left(DA(0)v(x)\right) = DA(0)v(x) = u(x)$ for any $x\geq 0$, i.e.\ $u = v$.	
\end{proof}

Observe that Equation \eqref{eq:affine geometry} is affine in $A(y)$. Therefore, if we replace the process $y$ by $A(y)$, then we get precisely an affine relation of the factor model.

\begin{cor}
	Assume the requirements of Theorem \ref{Main} and we use its notations. Then
	$Z_t := A(Y_t)$   is an It\^o process and we have
	$$f_t(x)  = c(x) + u(x) Z_t.$$
	\begin{proof}
		This is immediate from Corollary \ref{p:Main} and It\^o's formula.
	\end{proof}
\end{cor}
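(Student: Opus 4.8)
The plan is to read the result off from Proposition~\ref{p:Main} and then apply It\^o's formula once. Under the hypotheses of Theorem~\ref{Main} (in particular RN and SCC), Proposition~\ref{p:Main} provides functions $c:\mathbb R_+\to\mathbb R$, $u:\mathbb R_+\to\mathbb R^d$ and a twice continuously differentiable map $A:\mathbb R^d\to\mathbb R^d$ with
$$ g(x,y) = c(x) + u(x)A(y), \qquad x\ge 0,\ y\in\mathbb R^d. $$
Evaluating at $y=Y_t$ and using $f_t(x)=g(x,Y_t)$ together with $Z_t:=A(Y_t)$ gives $f_t(x)=c(x)+u(x)Z_t$ immediately, so the only substantive point that remains is that $Z=A(Y)$ is again an It\^o process.

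For that, recall from Definition~\ref{FDR} that $dY_t=\beta_t\,dt+\sigma_t\,dW_t$, where $\beta$ is pathwise integrable, $\sigma$ is pathwise square-integrable, and both are bounded on $\{|Y_\cdot|\in U\}$ for every bounded open $U$. Since $A\in C^2(\mathbb R^d;\mathbb R^d)$, It\^o's formula applied componentwise yields
$$ dZ_t = \Bigl( DA(Y_t)\beta_t + \tfrac12\, R(Y_t,\sigma_t)\Bigr)dt + DA(Y_t)\sigma_t\,dW_t, $$
where $R(Y_t,\sigma_t)$ has $k$-th component $\sum_{i,j=1}^d(\sigma_t\sigma_t^\top)_{ij}\,\partial_{y_i}\partial_{y_j}A_k(Y_t)$. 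To see this is a genuine It\^o process one must check that the new drift is pathwise integrable and the new diffusion coefficient pathwise square-integrable; this holds because the path $t\mapsto Y_t$ is continuous, hence stays in a bounded set on each compact time interval, so $DA(Y_\cdot)$ and $D^2A(Y_\cdot)$ are bounded there by continuity of the derivatives of $A$, and multiplying by the (locally) bounded $\beta$, $\sigma$ and $\sigma\sigma^\top$ preserves the relevant integrability.

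The only place that requires a little care — and it is still routine — is this last integrability check, precisely because the FDR framework only assumes local, path-dependent bounds on $\beta$ and $\sigma$ rather than global ones; one localises along the continuous trajectory of $Y$ and uses $A\in C^2$. With that in hand, both assertions of the corollary are established, and the displayed formula $f_t(x)=c(x)+u(x)Z_t$ is just the substitution made in the first paragraph.
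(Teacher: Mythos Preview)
Your proof is correct and follows exactly the approach of the paper: apply Proposition~\ref{p:Main} to obtain the representation $g(x,y)=c(x)+u(x)A(y)$ and then invoke It\^o's formula for the $C^2$ map $A$ to see that $Z_t=A(Y_t)$ is an It\^o process. The additional integrability verification you spell out is routine but harmless; the paper simply says ``immediate from Proposition~\ref{p:Main} and It\^o's formula.''
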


As soon as a parametrisation $g$ for the possible futures curves is chosen, it is desirable to estimate the volatility $\sigma$ of the underlying process $Y$ from data with some method. In order for this to make sense in view of absence of arbitrage, one needs that there is a corresponding drift coefficient such that 
 $$ f_t(x) = g(x,Y_t),\quad dY_t = \beta_t dt + \sigma_t dW_t,$$
 defines an NA model. Or equivalently, one can ask for the existence of the corresponding risk-neutral drift coefficient. Since we usually don't know which diffusion coefficient comes out of the estimation prior to inspecting the data, it is desirable to provide a drift coefficient for any possible diffusion coefficient. We start from this standpoint in our main theorem and prove that the corresponding model must have affine geometry.
\begin{thm}\label{Main}
	Assume that FDR and RN hold and that for any bounded progressively measurable $\mathbb R^{d\times d}$-valued process $\sigma$ there is a  progressively measurable $\mathbb R^d$-valued process $\beta^\sigma$ with locally integrable paths such that
		$$ dY^\sigma_t := \beta^\sigma_t dt + \sigma_t dW_t, \quad  f^\sigma_t(x) := g(x,Y^\sigma_t),$$
		defines an other RN model. We assume that $\beta^\sigma$ satisfies \eqref{SCC:def:3} of Definition \ref{SCC} if $\sigma$ is constant and deterministic.
	
	Then the model has affine geometry in the sense of Definition \ref{d:flat}.
\end{thm}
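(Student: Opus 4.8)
The plan is to deduce Theorem \ref{Main} from Proposition \ref{p:Main}: RN is assumed, so it suffices to check that the stated hypotheses imply the statistical consistency condition of Definition \ref{SCC}. In SCC the diffusion matrix $\sigma$ ranges only over \emph{constant} elements of $\mathbb R^{d\times d}$, and each such $\sigma$ is in particular a bounded progressively measurable process; so the hypothesis of the theorem supplies, for every constant $\sigma$, a progressively measurable $\beta^\sigma$ with locally integrable paths for which $Y^\sigma_t=y_0+\int_0^t\beta^\sigma_s\,ds+\sigma W_t$ makes $f^\sigma_t(x)=g(x,Y^\sigma_t)$ an RN model. The locally integrable paths and this RN property are the integrability condition and \eqref{SCC:RN} of Definition \ref{SCC}, and \eqref{SCC:def:3} is assumed for constant deterministic $\sigma$. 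Hence the whole matter reduces to establishing the local boundedness requirement \eqref{locB}.

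To get \eqref{locB}, I would fix a constant $\sigma$ and a bounded set $B\subseteq\mathbb R^d$, apply It\^o's formula to $f^\sigma_t(x)=g(x,Y^\sigma_t)$, and use the drift condition \eqref{NA:def:2} — this is the implication $(1)\Rightarrow(2)$ of Proposition \ref{NA}, whose proof does not appeal to \eqref{locB} — to obtain a $\lambda_1\otimes\mathbb P$-null set outside which
$$\nabla_y g(x,Y^\sigma_t)\,\beta^\sigma_t=\partial_x g(x,Y^\sigma_t)-\tfrac12\sum_{i,j=1}^d\sigma_{i,j}\sigma_{j,i}\,\partial_{y_i}\partial_{y_j}g(x,Y^\sigma_t),\qquad x\ge0.$$
Since $g$ is of class $C^{1,2}$, the right-hand side is a continuous function of $Y^\sigma_t$, hence bounded on $\{Y^\sigma_t\in B\}$ by a constant depending only on $x$, $\sigma$, $B$; conditioning on $Y^\sigma_t$ as in the computation of Lemma \ref{lem:drift} (and using Lemma \ref{l:locb-ac} to pass from $\mathbb P^{Y^\sigma_t}$-a.e.\ to Lebesgue-a.e.) then shows that $y\mapsto\nabla_y g(x,y)\,b^\sigma(y)$ is bounded on $B$, where $b^\sigma(y)=\E[\beta^\sigma_t\mid Y^\sigma_t=y]$. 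Replacing $b^\sigma(y)$ by its orthogonal projection onto $W_y:=\mathrm{span}\{\nabla_y g(x,y):x\ge0\}$ leaves \eqref{sigmaid}, and hence the identities \eqref{eq: partial y} and \eqref{eq:partial x} of the proof of Proposition \ref{p:Main}, unchanged, and it is this projected coefficient — equivalently the $\eta_{i,j}$ and $\gamma$, which are fixed linear combinations of finitely many of the $b^\sigma$ — that must be shown locally bounded, since that is what Proposition \ref{p:Main} (through Proposition \ref{tcdf}) consumes.

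I expect this last upgrade, from a bound on $\nabla_y g(x,\cdot)\,b^\sigma$ to a bound on the projected $b^\sigma$ itself, to be the main obstacle. Away from the set where $\dim W_y$ drops one can invert a uniformly non-degenerate Gram matrix of finitely many gradients $\nabla_y g(x_1,\cdot),\dots,\nabla_y g(x_m,\cdot)$ and read the bound off the bounded scalars $\nabla_y g(x_i,\cdot)\,b^\sigma$; near that exceptional set this breaks down and one must argue by a measurable selection, or by a continuity/approximation argument, or — if ``RN model'' is understood with the regularity that Definition \ref{FDR} builds into an FDR model — \eqref{locB} is simply immediate and this paragraph is not needed. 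Granting \eqref{locB}, SCC holds for every constant $\sigma$, so Proposition \ref{p:Main} applies and produces $c:\mathbb R_+\to\mathbb R$, $u:\mathbb R_+\to\mathbb R^d$ and $A:\mathbb R^d\to\mathbb R^d$ with $g(x,y)=c(x)+u(x)A(y)$, and moreover $c=g(\cdot,0)$, $u=\nabla_y g(\cdot,0)$ and $c,u_1,\dots,u_d$ quasi-exponential; this is precisely affine geometry in the sense of Definition \ref{d:flat}, which proves the theorem.
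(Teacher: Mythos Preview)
Your overall strategy is exactly the paper's: the proof of Theorem \ref{Main} in the paper is the single sentence ``The condition here implies the requirements of Proposition \ref{p:Main} and, hence, its conclusion follows.'' In other words, the authors simply assert that the hypotheses of the theorem entail SCC and then invoke Proposition \ref{p:Main}; your last paragraph is precisely that.

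Where you diverge is in the middle two paragraphs, where you attempt to \emph{derive} condition \eqref{locB} from the other assumptions. This attempt has a genuine gap: you invoke Lemma \ref{l:locb-ac} to pass from $\mathbb P^{Y^\sigma_t}$-a.e.\ to Lebesgue-a.e., but the hypothesis of Lemma \ref{l:locb-ac} is exactly the local boundedness $\beta_\cdot 1_{\{|X_\cdot|\in U\}}$ bounded --- i.e.\ \eqref{locB} itself. So the argument is circular at that step. The subsequent projection onto $W_y=\mathrm{span}\{\nabla_yg(x,y):x\ge0\}$ is also delicate (the dimension of $W_y$ and any Gram inverse vary with $y$), as you yourself flag.

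The resolution is the one you put in your own parenthetical: in this paper a ``model'' with $f^\sigma_t(x)=g(x,Y^\sigma_t)$ and $dY^\sigma_t=\beta^\sigma_t\,dt+\sigma\,dW_t$ is read as an FDR model in the sense of Definition \ref{FDR}, and that definition already contains the requirement that $\beta_\cdot 1_{\{|Y_\cdot|\in U\}}$ be bounded. Under that reading \eqref{locB} is immediate, SCC holds for every constant $\sigma$, and Proposition \ref{p:Main} applies. That is all the paper's proof is doing; your added derivation of \eqref{locB} is neither needed nor, as written, sound.
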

\begin{proof}
	The condition here implies the requirements of Proposition \ref{p:Main} and, hence, its conclusion follows.
	
%
%
%
%
\end{proof}

\section{Example}\label{s:example}
We present an example of a $1$-dimensional FDR\&RN model where the manifold of functions
$$\mathcal M = \left\{ g(\cdot,y) : y\in\mathbb R \right\},$$
is not affine-like, i.e., not contained in a finite dimensional space of curves, while the process $Y$ is simply a standard Brownian motion. Hence, 
$$f_t := g(\cdot ,Y_t),$$ 
is at any positive time near any given position in $\mathcal M$ with positive probability. 

Moreover, we embed the manifold into a Hilbert space and we show that the constructed process is a solution to the {HJMM}-equation, see Peszat and Zabczyk \cite[Theorem D.\ 2]{peszat2007stochastic} and the Hilbert space used is the one constructed in \cite[\textsection 5]{filipovic2001consistency}.

\begin{ex}\label{example2}
	We define
	$$g(x,y) = \Phi\left( \frac{1-y}{\sqrt{1+x}} \right), \quad  x\geq 0, y \in\mathbb R, $$
	where $\Phi$ is the distribution function of the standard normal law. Let $Y_t = Y_0 + W_t$, where $W$ is a $1$-dimensional standard Brownian motion with $Y_0\in\mathbb R$. We will also make use of the Hilbert space $H$ consisting of absolutely continuous functions $h:[0,\infty)\rightarrow \mathbb R$ which satisfy $\int_0^\infty |h'(x)|^2 \sqrt{(1+x)^3} dx <\infty$ endowed with the scalar product $$\langle h_1,h_2\rangle := h_1(0)h_2(0) + \int_0^\infty h_1'(x)h_2'(x) {(1+x)}^\frac{3}{2}dx.$$
	
	This Hilbert space has been discussed in \cite[\textsection 5]{filipovic2001consistency} and Equation (5.2) therein is not required due to Remark 5.1.1. The Hilbert space equipped with the scalar product is well defined.
	
	Essentially, we show that $f_t(x) := g(x,Y_t), t,x\geq 0$ defines an FDR\&RN model and that $f_t$ is an $H$-valued semimartingale in the sense of \cite[Definition 3.29]{peszat2007stochastic} which is a strong solution to the HJMM-equation
	$$ df_t = \partial_x f_t dt + \Sigma(f_t) dW_t, $$
	for some diffusion functional $\Sigma:H\rightarrow H$. We will also see that for no time $t>0$ there is a finite dimensional subspace $U$ of $H$ with $\mathbb P(f_t\in U)=1$. After this is shown it is clear from Theorem \ref{Main} that SCC does not hold because $g$ does not have the corresponding structure.
	\begin{itemize}
		\item \emph{FDR and RN condition hold.}	\\
		FDR holds by construction. 
		Moreover, we have 
		\begin{align}\label{x=frac12y}
		\partial_x g(x,y)=-\frac{1-y}{2 \sqrt{(1+x)^3}} \phi\left(\frac{1-y}{\sqrt{1+x}}\right)  =\frac12 \partial^2_y g(x,y),	
		\end{align}
		where $\phi$ is the normal density function and hence Proposition \ref{NA} yields RN.\\
		
		%
		
		\item \emph{$G:\mathbb R\rightarrow H,y\mapsto g(\cdot,y)$ is $C^2$.} \\
		$g$ is infinitely differentiable and, hence, $G(y)$ is absolutely continuous for any $y\in\mathbb R$. Moreover, we have
		\begin{align*}
		 \partial_xG(y) = \partial_xg(x,y) &= \phi\left(\frac{1-y}{\sqrt{1+x}}\right) \frac{y-1}{\sqrt{(1+x)^3}}\\ &= \frac{1}{\sqrt{2\pi}} \exp\left( -\frac{(1-y)^2}{2(1+x)}\right)\frac{y-1}{\sqrt{(1+x)^3}}, \end{align*}
		where $\phi$ denotes the density of the standard normal law. We find that
		$$\int_0^\infty |\partial _x G(y)|^2{\sqrt{(1+x)^3}}dx \leq \int_0^\infty \frac{|y-1|^2}{{\sqrt{(1+x)^3}}}dx < \infty, $$ 
		and, hence, $G(y)\in H$ for any $y\in\mathbb R$. The argument is easily repeated for $G'$ and $G''$, thus $G\in C^2(\mathbb R,H)$.\\
		
		\item \emph{$(f_t)_{t\geq 0}$ is an $H$-valued semimartingale.} \\
        For any $T\in L(H,\mathbb R)$ we have that $T\circ G$ is $C^2$ and It\^o's formula yields that
        \begin{align*}
        d T(f_t) = d(T\circ G)(Y_t) &= T(\frac12 G''(Y_t))dt+ T(G'(Y_t))dW_t \\
        &= T(\partial_xf_t) dt + T(G'(Y_t))dW_t . 
        \end{align*} 
    
         Observe that $t\mapsto \int_{0}^{t} \partial_x f_s ds$ is an $H$-valued process of bounded variation and $t\mapsto \int _0^t G'(Y_s) dW_s$ is an $H$-valued martingale. Consequently, $S_t := f_0 + \int_{0}^{t} \partial_x f_s ds + \int _0^t G'(Y_s) dW_s$ is some $H$-valued semi-martingale in the sense of \cite[Definition 3.29]{peszat2007stochastic} which satisfies 
          $$ dT(S_t) = T(\partial_xf_t)dt + T(G'(Y_t)) dW_t = dT(f_t), $$
          and, hence, $S_t = f_t$ by the Hahn-Banach separation theorem.
		
		
		\item \emph{$G$ is injective and $f$ solves the SPDE.} \\
		Let $y_1,y_2\in\mathbb R$ with $G(y_1) = G(y_2)$. Then $\Phi(1-y_1) = g(0, y_1) = g(0, y_2) = \Phi(1-y_2)$ and, hence, $y_1=y_2$. We denote by $G^{-1}$ the inverse of $G$ as a function to its range $R:=G(\mathbb R)$. Since $f_t = G(Y_t)$ we can apply $G^{-1}$ to $f_t$ for any $t\geq 0$.
		
		Define
		$$ \Sigma:H\rightarrow H, h \mapsto \begin{cases} G'(G^{-1}(h)) & h\in R, \\ 0 & \text{otherwise.}\end{cases} $$
		Then we have
		$$ df_t = \partial_xf_t dt + \Sigma(f_t) dW_t. $$
		
		\item \emph{The model is not affine-like.}\\
		Let $t>0$ and $U$ be any finite dimensional subspace of $H$. Note that the span of $R$ is $H$. Thus, there is $h\in R\setminus U$. Define $c := \mathrm{dist}(h,U)>0$. Then for some $\bar c>0$ we have
		$$\mathbb P(|f_t-h|< c) =\mathbb P(|Y_t-G^{-1}(h)|< \bar c) > 0,$$ and, hence, $\mathbb P(f_t\in U) < 1$ for any $t>0$. 
	\end{itemize}
	
\end{ex}

\begin{rem}
	From Equation \eqref{x=frac12y}, we can see the example above only works with {RN} under the condition that the volatility square is $1$. Due to this, it is incompatible with {SCC}. Moreover, in Theorem \ref{Main} we have shown that if {SCC} would hold, then the function $g$ would be affine as in Equation \eqref{eq:affine geometry}. 
	
\end{rem}






\appendix
\section{}
In the next proposition we consider a second order linear differential equation. Basically, we show that there is a solution independent function $A$ which is $C^2$ such that any $C^2$-solution can be represented by the function $A$, its initial value and its initial gradient. Note, that this does not ensure the existence of $C^2$-solutions but merely ensures that $C^2$-solutions have a common representation.

\begin{prop}\label{tcdf}
	Let $N\subseteq\mathbb R^d$ with $\lambda_d(N)=0$. Let $\eta_{i,j}:\mathbb R^d\setminus N\rightarrow \mathbb R^d$ be measurable and locally bounded with $\eta_{i,j}=\eta_{j,i}$ for any $i,j=1,\dots,d$.
	
	Then there is a twice continuously differentiable function $A:\mathbb R^d\rightarrow \mathbb R^d$ such that $A(0)=0$ and for any twice continuously differentiable function $g:\mathbb R^d\rightarrow \mathbb R$ with
	$$ \partial_i\partial_jg(y) = \nabla g(y) \eta_{i,j}(y),\quad i,j=1,\dots,d, y\in\mathbb R^d\setminus N, $$
	one has 
	$$ g(y) = g(0) + \nabla g(0) A(y),\quad y\in\mathbb R^d. $$
\end{prop}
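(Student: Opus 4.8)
The plan is to study the linear space $\mathcal S$ of all $C^2$ functions $g:\mathbb R^d\to\mathbb R$ that satisfy $\partial_i\partial_j g(y)=\nabla g(y)\,\eta_{i,j}(y)$ for all $i,j$ and all $y\in\mathbb R^d\setminus N$. Since constants lie in $\mathcal S$, this space is always non-trivial. The crucial step is to show that the Cauchy-data map $\mathrm{ev}_0:\mathcal S\to\mathbb R^{1+d}$, $g\mapsto(g(0),\nabla g(0))$, is injective; once this is in hand, $\mathcal S$ is automatically finite dimensional, of some dimension $m\le d+1$, and the function $A$ can be produced purely by linear algebra, which also makes its $C^2$-regularity and the normalisation $A(0)=0$ transparent.

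For the injectivity, suppose $g\in\mathcal S$ with $g(0)=0$ and $\nabla g(0)=0$; I want $g\equiv0$. Fix $y\in\mathbb R^d$ and consider the $C^1$ curve $\psi(t):=\nabla g(ty)$ on $[0,1]$. Differentiating and substituting the defining equation at every $t$ with $ty\notin N$ yields $\psi'(t)=P(t;y)\psi(t)$ for a.e.\ $t$, where the matrix $P(t;y)$ depends linearly on $y$ and on the values $\eta_{\cdot,\cdot}(ty)$. The point is that for $\lambda_d$-a.e.\ $y$ the segment $\{ty:t\in[0,1]\}$ meets $N$ in a one-dimensional Lebesgue-null set: passing to polar coordinates, $\lambda_d(N)=0$ gives that the pre-image of $N$ is null for $dr\otimes(\text{surface measure})$ on $(0,\infty)\times S^{d-1}$, and Fubini then identifies the exceptional directions as a null set on the sphere. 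For such $y$, local boundedness of the $\eta_{i,j}$ makes $t\mapsto P(t;y)$ integrable on $[0,1]$, so Gronwall's inequality applied to the absolutely continuous $\psi$ with $\psi(0)=0$ forces $\psi\equiv0$, hence $\frac{d}{dt}g(ty)=\nabla g(ty)\cdot y=0$ and $g(y)=g(0)=0$. Thus $g$ vanishes $\lambda_d$-a.e., and by continuity $g\equiv0$. I expect this measure-theoretic treatment of the exceptional set $N$ — reducing the a.e.-defined overdetermined first-order system to a genuine Carath\'eodory linear ODE along almost every ray — to be the main obstacle; the remainder is bookkeeping. (The hypothesis $\eta_{i,j}=\eta_{j,i}$ is consistent with $\partial_i\partial_j g=\partial_j\partial_i g$ and is not otherwise used.)

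Granting injectivity, set $m:=\dim\mathcal S$. If $m=1$, take $A\equiv0$; the representation for a constant then holds trivially. If $m\ge2$, choose a basis $g^{(1)},\dots,g^{(m)}$ of $\mathcal S$ with $g^{(1)}\equiv1$. Because $\mathrm{ev}_0$ is injective and $\mathrm{ev}_0(g^{(1)})=(1,0)$, the vectors $\nabla g^{(l)}(0)\in\mathbb R^d$, $l=2,\dots,m$, are linearly independent; let $M$ be the $(m-1)\times d$ matrix with these rows and put $b(y):=\bigl(g^{(l)}(y)-g^{(l)}(0)\bigr)_{l=2}^m$, a $C^2$ map with $b(0)=0$. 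Define
$$A(y):=M^\top\!\left(MM^\top\right)^{-1}b(y),\qquad y\in\mathbb R^d.$$
Then $A$ is $C^2$ as a fixed linear image of $b$, $A(0)=0$, and $MA(y)=b(y)$, i.e.\ $\nabla g^{(l)}(0)\cdot A(y)=g^{(l)}(y)-g^{(l)}(0)$ for $l\ge2$. Finally, for arbitrary $g=\sum_l c_l g^{(l)}\in\mathcal S$, using $\nabla g^{(1)}(0)=0$ and $g^{(1)}\equiv1$ one computes $g(0)+\nabla g(0)\cdot A(y)=c_1g^{(1)}(y)+\sum_{l\ge2}c_lg^{(l)}(y)=g(y)$, which is the claimed representation.
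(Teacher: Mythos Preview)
Your proof is correct and follows essentially the same strategy as the paper: introduce the linear solution space $\mathcal S$, prove injectivity of the Cauchy-data map $g\mapsto(g(0),\nabla g(0))$ via Gr\"onwall along rays from the origin, and then assemble $A$ by linear algebra from a basis of $\mathcal S$. Your polar-coordinates/Fubini argument showing that almost every ray meets $N$ in a one-dimensional null set is in fact more careful than the paper's version, which applies Gr\"onwall along \emph{every} ray without addressing that the defining relation $\partial_i\partial_j g=\nabla g\,\eta_{i,j}$ is only assumed off $N$; your subsequent passage from $g=0$ $\lambda_d$-a.e.\ to $g\equiv 0$ by continuity closes that gap cleanly. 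The only other difference is cosmetic: the paper builds $A$ via an orthogonal change of basis while you use the Moore--Penrose pseudoinverse $M^\top(MM^\top)^{-1}$, but both yield a $C^2$ map with $A(0)=0$ satisfying the required identity.
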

\begin{proof}
	Define 
	\begin{align*}
	\mathcal S &:= \{ f\in C^2(\mathbb R^d,\mathbb R): \partial_i\partial_j f(y) = \nabla f(y) \eta_{i,j}(y),\text{ for any } i,j=1,\dots,d, y\in\mathbb R^d\setminus N \}, \\
	\mathcal S_0&:= \{ f \in \mathcal S: f(0) = 0 \}.
	\end{align*}
	Note that $\mathcal S$, $\mathcal S_0$ are vector spaces and for $f\in \mathcal S$ one has $h:=f-f(0)\in\mathcal S_0$ and
	$$ f = f(0) + h, $$
	i.e.\ $\mathcal S$ is the direct sum $\mathcal F_c \oplus \mathcal S_0$ where $\mathcal F_c$ is the space of constant functions. 
	
	Let $h \in \mathcal S_0$ with $\nabla h(0) = 0$. We show that $h=0$. To this end let $R>0$. By assumption on $\eta$ there is a constant $C\geq 0$ such that $\eta_{i,j}$ is bounded on the ball with radius $R$ by $C$  for any $i,j=1,\dots,d$. We find for $x\in\mathbb R^d\setminus \{0\}$ with $|x|\leq R$ that
	$$ |\nabla h(x)| = \left|\int_0^1 D(\nabla h)(tx) x dt\right| \leq Cd^2 \int_0^1 \left|\nabla h(tx)\right||x| dt. $$
	
	Thus, Gr\"onwall's lemma yields that
	$$ |\nabla h(x)| \leq |\nabla h(0)| \exp( |x| Cd^2) = 0, $$
	for any $x\in\mathbb R^d$ with $|x|\leq R$. Consequently, $\nabla h = 0$ which yields that $h$ is constant. Since $h(0)=0$ we find that $h=0$.
	
	Define $\Theta:\mathcal S_0\rightarrow\mathbb R^d, f\mapsto \nabla f(0)$. By the above we have that $\Theta$ is an injective linear map. Consequently, $l:= \dim(\mathcal S_0)\leq d$. Let $f_1,\dots,f_l$ be maximal linear independent in $\mathcal S_0$ such that $b^j:= \Theta(f_j)$, $j=1,\dots,l$ is an orthonormal system with respect to the standard scalar product. There is an orthogonal transformation $T:\mathbb R^d\rightarrow \mathbb R^d$ with $Tb^j = e_j$ for any $j=1,\dots, l$. We define
	\begin{align*}
	f(y) :=& (f_1(y),\dots, f_l(y),0,\dots, 0), \\
	A(y) :=& T^\top (f(y)),
	\end{align*}
	for any $y\in\mathbb R^d$. Note that $A:\mathbb R^d\rightarrow\mathbb R^d$ is twice continuously differentiable and $A(0)=0$. 
	
	Moreover, we have for $j=1,\dots,l$ and $y\in\mathbb R^d$ that
	$$ f_j(0) + \nabla f_j(0) A(y) = b^j T^\top (f(y)) = T(b^j) f(y) = e_j f(y) = f_j(y). $$
	By linearity we find that 
	$$ h(0) + \nabla h(0)A(y) = h(y), $$
	for any $y\in\mathbb R^d$, $h\in\mathcal S_0$. The claim follows.
\end{proof}

Next, we show that the Lebesgue measure (on $\mathbb R^d$) is absolutely continuous with respect to $\mathbb P^{X_t}$ for any $t>0$ where $X$ is an $\mathbb R^d$-valued diffusion with constant diffusion coefficient and locally bounded drift.
\begin{lem}\label{l:locb-ac}
	Let $\beta$ be an $\mathbb R^d$-valued progressively measurable process with locally integrable paths, $\sigma\in\mathbb R^{d\times d}$ be positive definite and $W$ be a $d$-dimensional standard Brownian motion. Let 
	$$ dX_t = \beta_t dt + \sigma dW_t ,$$
	and assume that for any bounded open set $U\subseteq\mathbb R$, we have $\beta_{\cdot} 1_{\{|X_{\cdot}|\in U \}}$ is a bounded process.
	
	Then for any $t>0$ and for any Borel $\mathbb P^{X_t}$-null set $N$ we have $\lambda_d(N)=0$.
\end{lem}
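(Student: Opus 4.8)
The plan is to prove the contrapositive: fixing $t>0$ and a Borel set $N\subseteq\mathbb R^d$ with $\lambda_d(N)>0$, I would show $\mathbb P(X_t\in N)>0$. Since $\lambda_d\big(N\cap B(0,m)\big)\uparrow\lambda_d(N)>0$, after replacing $N$ by $N\cap B(0,m)$ for large $m$ I may assume $N\subseteq B(0,\rho)$ for some $\rho>0$. Choose a radius $R>\rho$ large enough that also $\mathbb P(|X_0|<R)>0$ (possible since $|X_0|<\infty$ a.s.), and let $\tau:=\inf\{s\ge0:|X_s|\ge R\}$. Applying the hypothesis with the bounded open set $U=(-1,R)$ shows $\beta_\cdot 1_{\{|X_\cdot|<R\}}$ is a bounded process, so $\theta_s:=\sigma^{-1}\beta_s$ satisfies $|\theta_s|\le K$ for $s<\tau$, some $K=K(R)$; in particular the stopped integrand $\theta_\cdot 1_{\{\cdot<\tau\}}$ is bounded and Novikov's condition holds.

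Next I would apply Girsanov's theorem: the density
$$Z:=\exp\left(-\int_0^{t\wedge\tau}\theta_s\,dW_s-\tfrac12\int_0^{t\wedge\tau}|\theta_s|^2\,ds\right)$$
defines a probability $\mathbb Q\sim\mathbb P$ on $\mathcal F_t$ under which $\tilde W_s:=W_s+\int_0^{s\wedge\tau}\theta_u\,du$, $s\le t$, is an $(\mathcal F_s)$-Brownian motion. Since $\beta=\sigma\theta$, for $s\le\tau$ one has $X_s=X_0+\int_0^s\beta_u\,du+\sigma W_s=X_0+\sigma\tilde W_s$; hence $\{\sup_{s\le t}|X_0+\sigma\tilde W_s|<R\}\subseteq\{\tau>t\}$, and on that event $X_t=X_0+\sigma\tilde W_t$, so $\{X_0+\sigma\tilde W_t\in N,\ \sup_{s\le t}|X_0+\sigma\tilde W_s|<R\}\subseteq\{X_t\in N\}$. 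Moreover, being an $(\mathcal F_s)$-Brownian motion, $(\tilde W_s)_{s\le t}$ is independent of $\mathcal F_0$ under $\mathbb Q$, hence independent of $X_0$; and $\mathbb Q=\mathbb P$ on $\mathcal F_0$, so $X_0$ has $\mathbb Q$-law $\mu_0:=\mathbb P^{X_0}$. Disintegrating over $X_0$,
$$\mathbb Q(X_t\in N)\ \ge\ \int_{\mathbb R^d}\pi(x_0)\,\mu_0(dx_0),\qquad \pi(x_0):=\mathbb Q\Big(x_0+\sigma\tilde W_t\in N,\ \sup_{s\le t}|x_0+\sigma\tilde W_s|<R\Big).$$
As $\{X_t\in N\}\in\mathcal F_t$ and $\mathbb Q\sim\mathbb P$ there, it then suffices to show the right-hand side is strictly positive.

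Finally, I would show $\pi(x_0)>0$ for every $x_0\in B(0,R)$; since $\mu_0(B(0,R))=\mathbb P(|X_0|<R)>0$, this makes $\int\pi\,d\mu_0\ge\int_{B(0,R)}\pi\,d\mu_0>0$ and finishes the argument. Under $\mathbb Q$ the process $\sigma\tilde W$ is a non-degenerate Brownian motion (covariance $\sigma\sigma^\top$), so for $x_0\in B(0,R)$ one has $\pi(x_0)=\int_N q_t(x_0,y)\,dy$, where $q_t$ is the transition density of $\sigma\tilde W$ killed upon leaving $B(0,R)$; since $B(0,R)$ is open, bounded and connected and the generator is uniformly elliptic, $q_t(x_0,y)>0$ for all $x_0,y\in B(0,R)$ and $t>0$, and $\lambda_d(N)>0$ with $N\subseteq B(0,\rho)\subseteq B(0,R)$ gives $\pi(x_0)>0$. (One may instead avoid quoting heat-kernel positivity and argue via the full topological support of the Brownian bridge: conditioning $\sigma\tilde W$ on its time-$t$ value $z$, whose density is everywhere positive, for $x_0\in B(0,\rho)$ and $x_0+z\in N\subseteq B(0,\rho)$ the bridge gives positive mass to any tube around the segment from $x_0$ to $x_0+z$, which stays in $B(0,\rho)\subseteq B(0,R)$.) The steps that need care are the Girsanov bookkeeping on $\mathcal F_t$ — stopping at $\tau$ makes the integrand bounded so there is no integrability problem, and $\tilde W$ remaining an $(\mathcal F_s)$-Brownian motion under $\mathbb Q$ is exactly what makes it independent of $X_0$ — and the strict positivity of the killed transition density, which is the one genuinely analytic input.
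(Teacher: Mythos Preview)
Your proof is correct and follows essentially the same route as the paper: apply Girsanov with the drift stopped upon exit from a large ball (so Novikov holds), identify $X$ with $X_0+\sigma\tilde W$ until that exit, and then use non-degeneracy of the resulting Brownian motion to link the law of $X_t$ on the event of no exit with Lebesgue measure. The paper argues the direct implication rather than the contrapositive and is terser at the last step (it simply asserts $\lambda_d(N)=0$ from $\mathbb Q(X_0+\sigma B_t\in N\mid S)=0$ without naming the killed heat-kernel positivity), but the underlying ideas are identical.
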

\begin{proof}
	Since $\Omega = \bigcup_{n\in\mathbb N} \{ |X_0|<n\}$ we may assume that $X_0$ is bounded.
	
	Let $t>0$ and $N\subseteq\mathbb R^d$ be a Borel $\mathbb P^{X_t}$-null set. Since any measurable set is a countable union of bounded measurable sets we may assume without loss of generality that $N$ is bounded. Then there is $K>0$ such that $|x|< K/2$ for any $x \in N \cup \mathrm{ran}(X_0)$ where $\mathrm{ran}(X_0):=\{X_0(\omega)|\omega\in\Omega\}$.
	
	The process $s\mapsto \beta_s 1_{\{|X_s|\leq K, s\leq t\}}$ is bounded by assumption. Hence, Girsanov's theorem yields a measure $\mathbb Q\sim \mathbb P$ such that there is a $\mathbb Q$-Brownian motion $B$ with
	\begin{align*}
	dX_s &= \beta^\mathbb Q_s ds + \sigma dB_s, \\
	\beta^\mathbb Q_s &= \beta_s 1_{\{|X_s|> K\text{ or }s>t\}}. 
	\end{align*}     
	
	Since $\mathbb Q$ is equivalent to $\mathbb P$ we have $\mathbb Q(X_t\in N) = 0$. We define $S := \{ \sup_{s\in[0,t]} |X_s| \leq K \}$ and have $1 > \mathbb Q(S) > 0$ because X has zero drift until its absolute value reaches $K$ and hence $X$ is a Brownian motion until then. Hence, we have	
	$$ 0 =  \mathbb Q( X_t \in N) = \mathbb Q( X_t \in N : S) \mathbb Q(S) + \mathbb Q( X_t \in N : S^c) \mathbb Q(S^c),$$
	where $\mathbb Q(A:B):= \frac{\mathbb Q(A \cap B)}{\mathbb Q(B)}$ is the conditional    probability.
	
	By positivity of the measure we find that $\mathbb Q( X_t \in N : S) \mathbb Q(S) = 0$ and since $\mathbb Q(S)>0$ we have $\mathbb Q( X_t \in N : S) = 0$. Since $X$ is a $\mathbb Q$-Brownian motion until its growth larger than $K$ we find that
	$$ 0 = \mathbb Q( X_t \in N : S) = \mathbb Q( X_0+\sigma B_t \in N : \sup_{s\in[0,t]} |X_0+\sigma B_s| \leq K ). $$
	
	Since $\sigma B$ is a non-degenerated $\mathbb Q$-Brownian motion, we find that $\lambda_d(N) = 0$ as claimed.
\end{proof}

\bibliographystyle{alpha}
\bibliography{REF}

\end{document}